\theoremstyle{definition}
\numberwithin{equation}{section}
\newcommand{\ncom}{\newcommand}
\ncom{\beq}{\begin{equation}}
\ncom{\eeq}{\end{equation}}
\ncom{\bea}{\begin{eqnarray*}}
\ncom{\eea}{\end{eqnarray*}}
\ncom{\beqa}{\begin{eqnarray}}
\ncom{\eeqa}{\end{eqnarray}}
\ncom{\nno}{\nonumber}
\ncom{\non}{\nonumber}
\ncom{\ds}{\displaystyle}
\ncom{\half}{\frac{1}{2}}
\ncom{\mbx}{\makebox{.25cm}}
\ncom{\hs}{\mbox{\hspace{.25cm}}}
\ncom{\rar}{\rightarrow}
\ncom{\Rar}{\Rightarrow}
\ncom{\noin}{\noindent}
\ncom{\bc}{\begin{center}}
\ncom{\ec}{\end{center}}
\ncom{\sz}{\scriptsize}
\ncom{\rf}{\ref}
\ncom{\s}{\sqrt{2}}
\ncom{\sgm}{\sigma}
\ncom{\Sgm}{\Sigma}
\ncom{\psgm}{\sigma^{\prime}}
\ncom{\dt}{\delta}
\ncom{\Dt}{\Delta}
\ncom{\lmd}{\lambda}
\ncom{\Lmd}{\Lambda}
\ncom{\Th}{\Theta}
\ncom{\e}{\eta}
\ncom{\eps}{\epsilon}
\ncom{\pcc}{\stackrel{P}{>}}
\ncom{\lp}{\stackrel{L_{p}}{>}}
\ncom{\dist}{{\rm\,dist}}
\ncom{\sspan}{{\rm\,span}}
\ncom{\re}{{\rm Re\,}}
\ncom{\im}{{\rm Im\,}}
\ncom{\sgn}{{\rm sgn\,}}
\ncom{\ba}{\begin{array}}
\ncom{\ea}{\end{array}}
\ncom{\hone}{\mbox{\hspace{1em}}}
\ncom{\htwo}{\mbox{\hspace{2em}}}
\ncom{\hthree}{\mbox{\hspace{3em}}}
\ncom{\hfour}{\mbox{\hspace{4em}}}
\ncom{\vone}{\vskip 2ex}
\ncom{\vtwo}{\vskip 4ex}
\ncom{\vonee}{\vskip 1.5ex}
\ncom{\vthree}{\vskip 6ex}
\ncom{\vfour}{\vspace*{8ex}}
\ncom{\norm}{\|\;\;\|}
\ncom{\integ}[4]{\int_{#1}^{#2}\,{#3}\,d{#4}}
\ncom{\vspan}[1]{{{\rm\,span}\{ #1 \}}}
\ncom{\dm}[1]{ {\displaystyle{#1} } }
\ncom{\ri}[1]{{#1} \index{#1}}
\newtheorem{theorem}{\bf Theorem}[section]
\newtheorem{remark}{\bf Remark}[section]
\newtheorem{corollary}{Corollary}[section]
\newtheorem{example}{Example}[section]
\newtheoremstyle
    {remarkstyle}
    {}
    {11pt}
    {}
    {}
    {\bfseries}
    {:}
    {     }
    {\thmname{#1} \thmnumber{#2} }
\theoremstyle{remarkstyle}
\def\eps{\varepsilon}
\begin{document}

\title{\Large E\lowercase{valuation of missing data mechanisms in two and three dimensional incomplete tables}}
\author[Sayan Ghosh]{S. Ghosh}
\address{Sayan Ghosh, Theoretical Statistics and Mathematics Unit,
 Indian Statistical Institute Kolkata, 203 B.T. Road, Kolkata 700108, INDIA.}
 \email{sayan38@gmail.com}
\author{P. Vellaisamy}
\address{P. Vellaisamy, Department of Mathematics,
Indian Institute of Technology Bombay, Powai, Mumbai 400076, INDIA.}
\email{pv@math.iitb.ac.in}
\subjclass[2010]{Primary : 62H17}
\keywords{Incomplete tables; Missing data mechanism; Log-linear models; Response/Non-response odds; Missing data models.}

\begin{abstract}
The analysis of incomplete contingency tables is a practical and an interesting problem. In this paper, we provide characterizations for the various missing mechanisms of a variable in terms of response and non-response odds for two and three dimensional incomplete tables. Log-linear parametrization and some distinctive properties of the missing data models for the above tables are discussed. All possible cases in which data on one, two or all variables may be missing are considered. We study the missingness of each variable in a model, which is more insightful for analyzing cross-classified data than the missingness of the outcome vector. For sensitivity analysis of the incomplete tables, we propose easily verifiable procedures to evaluate the missing at random (MAR), missing completely at random (MCAR) and not missing at random (NMAR) assumptions of the missing data models. These methods depend only on joint and marginal odds computed from fully and partially observed counts in the tables, respectively. Finally, some real-life datasets are analyzed to illustrate our results, which are confirmed based on simulation studies.
\end{abstract}

\maketitle

\section{Introduction}
A contingency table with fully observed counts and supplemental margins (non-responses) is called an incomplete table. For inference purposes, three types of missing data mechanisms are used to study non-responses (see Little and Rubin (2002)): missing completely at random (MCAR), missing at random (MAR) and not missing at random (NMAR). The missing mechanism is said to be MCAR when missingness is independent of both observed and unobserved data, MAR when missingness depends only on observed data, and NMAR if missingness depends only on unobserved data. Non-responses can be either ignorable (when the missing data mechanism is MAR or MCAR, and the estimated parameters are distinct from those involving the missing data mechanism) or nonignorable (when the missing data mechanism is NMAR). 

The assumption regarding the missing data mechanism in the model cannot be usually confirmed from the model fit to the observed data. Hence, it is difficult to use non-response models to analyze incomplete tables. 
Molenberghs {\it et al.} (2008) considered the missingness of the outcome vector in an incomplete table and demonstrated that every NMAR model has a MAR counterpart with equal fit. Several researchers have implemented sensitivity analysis to assess the missing data mechanism in incomplete tables. One approach is to compare the relevant parameter estimates from a range of candidate models (see Baker {\it et al.} (1992)). Another approach is to consider overspecified models with sensitivity parameter and construct confidence intervals for the parameters to investigate the statistical uncertainty due to incomplete data and finite sampling (see Molenberghs {\it et al.} (2001) and Vansteelandt {\it et al.} (2006)). 

Park {\it et al.} (2014) identified sufficient conditions for the occurrence of boundary solutions (zero MLEs of cell probabilities) in two-way incomplete tables with both variables missing. Recently, Ghosh and Vellaisamy (2016a) studied boundary solutions in multidimensional incomplete tables with one or more variables missing, and established sufficient conditions for their occurrence. Also, Ghosh and Vellaisamy (2017) studied boundary solutions in two-way incomplete tables with both variables missing, and proved various results including necessary conditions for their occurrence. However, in this paper, we consider a different problem of evaluation of missing data mechanisms in two-way and three-way incomplete tables. So the methods suggested in this paper are also different, except the use of response and non-response odds and their estimators. Our goals here are to study characteristic properties of missing data models using the above odds and then develop sensitivity analysis based on the estimators to determine various missing mechanisms of the variables in some incomplete tables. Essentially, we provide conditions using only the observed data to assess the various missing data models without actually fitting them to the incomplete tables. Interestingly, the models suggested by our analysis are the ones with no boundary solutions, which is expected since boundary solutions pose a lot of problems for inference with missing data.      

Recently, Kim {\it et al.} (2015) proposed a new and convenient method of sensitivity analysis to assess the MAR assumption for a two-way incomplete table with one supplemental margin ($I\times J\times 2$ table in Baker and Laird (1988)). In this paper, we establish methods to assess the MCAR, MAR and NMAR assumptions for $I\times J\times 2\times 2$, $I\times J\times K\times 2$, $I\times J\times K\times 2\times 2$ and $I\times J\times K\times 2\times 2\times 2$ tables, that is, two-way tables with both variables missing, and three-way tables with one, two and all variables missing, respectively. Note that the evaluation of missing mechanisms in the above incomplete tables has not been studied earlier in the literature. The main advantage of our proposed methods is computational simplicity as they are based on the response and non-response odds involving only the observed counts or their sums in the tables. We also study distinguishing attributes of various missing data models and provide characterizations for the different missing mechanisms of a variable in terms of the above odds. These conditions help us to develop the assessment procedures. Another advantage of our methods is that they can suggest the missing mechanism of a variable (plausible models for the incomplete data) without computing the usual model selection criteria.    

The remaining part of the paper is organized as follows. Section 2 considers non-response log-linear models for $I\times J\times 2\times 2$ tables. Some results regarding characteristic features of these missing data models are provided. The missing mechanisms of the variables are identified using the response and non-response odds based on joint and marginal cell probabilities, respectively. Assessment of MAR, MCAR and NMAR mechanisms is carried out using the estimators of the above odds computed from only the observed counts in the tables. Therefore, there is no need to use numerical or simulation procedures for the analysis of such tables. The results and discussions in Section 2 are extended in Section 3 to three-way incomplete tables with one, two or all variables missing. Section 4 presents real-life data analysis examples along with bootstrapping to illustrate the results in Sections 2 and 3. Some concluding remarks about the methods of assessment for the missing mechanisms are provided in Section 5. Finally, the Appendix contains the proofs of the theorems in Sections 2 and 3.

\section{Missing data models for the $I\times J\times 2\times 2$ table}

Kim {\it et al.} (2015) considered missing data models and sensitivity analysis for the $I\times J\times 2$ table. They also mentioned that it would be of interest to study such models and develop sensitivity analysis for two-way tables with both variables subject to missingness. In this section, we address these issues.
 
Let $Y_{1}$ and $Y_{2}$ be two categorical variables with $I$ and $J$ levels respectively. It is assumed that data on both variables may be missing. For $i=1,2,$ let $R_{i}$ denote the missing indicator variable for $Y_{i}$ such that $R_{i}=1$ if $Y_{i}$ is observed and $R_{i}=2$ otherwise. Then we have an $I\times J\times 2\times 2$ table with cell probabilities ${\bf\pi} = \{\pi_{ijkl}\}$ and cell counts ${\bf y} = \{y_{ijkl}\},$ where $1\leq i\leq I,~1\leq j\leq J$ and $k,l = 1,2.$ The vector of observed frequencies is given by ${\bf y_{\textrm{obs}}} = \left(\{y_{ij11}\},\{y_{+j21}\},\{y_{i+12}\},y_{++22}\right),$ where a `+' in the subscript denotes summation over levels of the corresponding variable. Here $\{y_{ij11}\}$ are the fully observed counts, while $\{y_{+j21}\},\{y_{i+12}\}$ and $y_{++22}$ are the supplemental margins. Table \ref{t1} shows the $I\times J\times 2\times 2$ table. 
\begin{table}[ht]   
\caption{$I\times J\times 2\times 2$ Incomplete Table}\label{t1}
\begin{center}
$
\begin{array}{|c|c|cccc|c|}\hline
& & R_{2} = 1 & & & & R_{2} = 2 \\ \hline
& & Y_{2} = 1 & Y_{2} = 2 & \cdots & Y_{2} = J & Y_{2}~\textrm{missing} \\ \hline
R_{1} = 1 & Y_{1} = 1 & y_{1111} & y_{1211} & \cdots & y_{1J11} & y_{1+12} \\
& Y_{1} = 2 & y_{2111} & y_{2211} & \cdots & y_{2J11} & y_{2+12} \\ 
& \vdots & \vdots & \vdots & \vdots & \vdots & \vdots \\
& Y_{1} = I & y_{I111} & y_{I211} & \cdots & y_{IJ11} & y_{I+12} \\ \hline
R_{1} = 2 & Y_{1}~\textrm{missing} & y_{+121} & y_{+221} & \cdots & y_{+J21} & y_{++22} \\ \hline
\end{array}
$
\end{center}
\end{table}
Let the vector of expected counts be $\mu = \{\mu_{ijkl}\}$ and $N = \sum_{i,j,k,l}y_{ijkl}$ be the total cell count. Under Poisson sampling for observed cell counts, the log-likelihood of ${\bf\mu}$ is 
\begin{eqnarray}\label{eq2.1}
l({\bf\mu};{\bf y}_{\textrm{obs}}) &=& \sum_{i,j}y_{ij11}\log \mu_{ij11} + \sum_{j}y_{+j21}\log \mu_{+j21} + \sum_{i}y_{i+12}\log \mu_{i+12} \nonumber \\ 
& & + y_{++22}\log \mu_{++22} - \sum_{i,j,k,l}\mu_{ijkl} + \Delta,
\end{eqnarray} 
where $\Delta$ is some constant. For an $I\times J\times 2\times 2$ incomplete table, Baker {\it et al.} (1992) proposed the following log-linear model (with no three-way or four-way interactions):
\begin{eqnarray}\label{eq1}
\log \mu_{ijkl} &=& \lambda + \lambda_{Y_{1}}(i) + \lambda_{Y_{2}}(j) + \lambda_{R_{1}}(k) + \lambda_{R_{2}}(l) + \lambda_{Y_{1}Y_{2}}(i,j) \nonumber \\
& & + \lambda_{Y_{1}R_{1}}(i,k) + \lambda_{Y_{2}R_{1}}(j,k) + \lambda_{Y_{1}R_{2}}(i,l) + \lambda_{Y_{2}R_{2}}(j,l) + \lambda_{R_{1}R_{2}}(k,l). 
\end{eqnarray}
Each log-linear parameter in (\ref{eq1}) satisfies the constraint that the sum over each of its arguments ($i,j,k,l$) is 0. Henceforth, in this paper, we study the missingness of each variable and not the usual missingness of the outcome vector as considered by Baker {\it et al.} (1992). Our approach is based on the classification scheme of missing data models considered by Park {\it et al.} (2014). Also, this approach is more natural and meaningful than the conventional one in the context of incomplete tables since it provides an insight into each variable's missing mechanism, and hence allows us to consider a larger class of models with explicit forms.

By definition, the missing mechanism of a variable may depend on itself (NMAR) or on another observed variable (MAR) or none (MCAR). Equivalently, for $i\neq j$ in (\ref{eq1}), we have $\lambda_{Y_{j}R_{i}}=0$, $\lambda_{Y_{i}R_{i}}=0$, and $\lambda_{Y_{i}R_{i}}=\lambda_{Y_{j}R_{i}}=0$ if the missing mechanism of $Y_{i}$ is NMAR, MAR and MCAR respectively. By interchanging $i$ and $j$, similar constraints in (\ref{eq1}) can be imposed for the various missing mechanisms of $Y_{j}$. The missing data models can thus be obtained as submodels of (\ref{eq1}). Baker {\it et al.} (1992) suggested nine such identifiable models (using a different parametrization), whose log-linear formulations (based on different missing mechanisms for $Y_{1}$ and $Y_{2}$) are as follows (see Park {\it et al.} (2014) or Ghosh and Vellaisamy (2017)). 
\vone\noindent
M1. NMAR for $Y_{1}$, MCAR for $Y_{2}$ : 
\begin{equation*}
\log \mu_{ijkl} = \lambda + \lambda_{Y_{1}}(i) + \lambda_{Y_{2}}(j) + \lambda_{R_{1}}(k) + \lambda_{R_{2}}(l) + \lambda_{Y_{1}Y_{2}}(i,j) + \lambda_{Y_{1}R_{1}}(i,k) + \lambda_{R_{1}R_{2}}(k,l)
\end{equation*}
M2. NMAR for $Y_{1}$, MAR for $Y_{2}$ : 
\begin{equation*}
\log \mu_{ijkl} = \lambda + \lambda_{Y_{1}}(i) + \lambda_{Y_{2}}(j) + \lambda_{R_{1}}(k) + \lambda_{R_{2}}(l) + \lambda_{Y_{1}Y_{2}}(i,j) + \lambda_{Y_{1}R_{1}}(i,k) + \lambda_{Y_{1}R_{2}}(i,l) + \lambda_{R_{1}R_{2}}(k,l)
\end{equation*}
M3. NMAR for both $Y_{1}$ and $Y_{2}$ : 
\begin{equation*}
\log \mu_{ijkl} = \lambda + \lambda_{Y_{1}}(i) + \lambda_{Y_{2}}(j) + \lambda_{R_{1}}(k) + \lambda_{R_{2}}(l) + \lambda_{Y_{1}Y_{2}}(i,j) + \lambda_{Y_{1}R_{1}}(i,k) + \lambda_{Y_{2}R_{2}}(j,l) + \lambda_{R_{1}R_{2}}(k,l)
\end{equation*}
M4. MAR for $Y_{1}$, MCAR for $Y_{2}$ : 
\begin{equation*}
\log \mu_{ijkl} = \lambda + \lambda_{Y_{1}}(i) + \lambda_{Y_{2}}(j) + \lambda_{R_{1}}(k) + \lambda_{R_{2}}(l) + \lambda_{Y_{1}Y_{2}}(i,j) + \lambda_{Y_{2}R_{1}}(j,k) + \lambda_{R_{1}R_{2}}(k,l)
\end{equation*}
M5. MAR for both $Y_{1}$ and $Y_{2}$ : 
\begin{equation*}
\log \mu_{ijkl} = \lambda + \lambda_{Y_{1}}(i) + \lambda_{Y_{2}}(j) + \lambda_{R_{1}}(k) + \lambda_{R_{2}}(l) + \lambda_{Y_{1}Y_{2}}(i,j) + \lambda_{Y_{2}R_{1}}(j,k) + \lambda_{Y_{1}R_{2}}(i,l) + \lambda_{R_{1}R_{2}}(k,l)
\end{equation*}
M6. MAR for $Y_{1}$, NMAR for $Y_{2}$ : 
\begin{equation*}
\log \mu_{ijkl} = \lambda + \lambda_{Y_{1}}(i) + \lambda_{Y_{2}}(j) + \lambda_{R_{1}}(k) + \lambda_{R_{2}}(l) + \lambda_{Y_{1}Y_{2}}(i,j) + \lambda_{Y_{2}R_{1}}(j,k) + \lambda_{Y_{2}R_{2}}(j,l) + \lambda_{R_{1}R_{2}}(k,l)
\end{equation*}
M7. MCAR for $Y_{1}$, MAR for $Y_{2}$ : 
\begin{equation*}
\log \mu_{ijkl} = \lambda + \lambda_{Y_{1}}(i) + \lambda_{Y_{2}}(j) + \lambda_{R_{1}}(k) + \lambda_{R_{2}}(l) + \lambda_{Y_{1}Y_{2}}(i,j) + \lambda_{Y_{1}R_{2}}(i,l) + \lambda_{R_{1}R_{2}}(k,l)
\end{equation*}
M8. MCAR for $Y_{1}$, NMAR for $Y_{2}$ :
\begin{equation*}
\log \mu_{ijkl} = \lambda + \lambda_{Y_{1}}(i) + \lambda_{Y_{2}}(j) + \lambda_{R_{1}}(k) + \lambda_{R_{2}}(l) + \lambda_{Y_{1}Y_{2}}(i,j) + \lambda_{Y_{2}R_{2}}(j,l) + \lambda_{R_{1}R_{2}}(k,l)
\end{equation*}
M9. MCAR for both $Y_{1}$ and $Y_{2}$ : 
\begin{equation*}
\log \mu_{ijkl} = \lambda + \lambda_{Y_{1}}(i) + \lambda_{Y_{2}}(j) + \lambda_{R_{1}}(k) + \lambda_{R_{2}}(l) + \lambda_{Y_{1}Y_{2}}(i,j) + \lambda_{R_{1}R_{2}}(k,l)
\end{equation*}
Next, we describe various features of these models, which help us investigate the missing data mechanisms of the variables in an $I\times J\times 2\times 2$ table.

\subsection{Properties of the missing data models}

Define the following odds for a pair $(j,j')$ of $Y_{2}$ :
\begin{equation*}
\nu_{i}(j,j') = \frac{\pi_{ij11}}{\pi_{ij'11}},~\nu_{n}(j,j') = \min_{i}\{\nu_{i}(j,j')\},~\nu_{m}(j,j') = \max_{i}\{\nu_{i}(j,j')\},~\nu(j,j') = \frac{\pi_{+j21}}{\pi_{+j'21}},
\end{equation*}
where $\nu_{i}(j,j')$ are the response odds and $\nu(j,j')$ are the non-response odds when $Y_{1}$ is missing. Also, define the following odds for any pair $(i,i')$ of $Y_{1}$ :
\begin{equation*}
\omega_{j}(i,i') = \frac{\pi_{ij11}}{\pi_{i'j11}},~\omega_{n}(i,i') = \min_{j}\{\omega_{j}(i,i')\},~\omega_{m}(i,i') = \max_{j}\{\omega_{j}(i,i')\},~\omega(i,i') = \frac{\pi_{i+12}}{\pi_{i'+12}}.
\end{equation*}
Here, $\omega_{j}(i,i')$ are the response odds, while $\omega(i,i')$ are the non-response odds when $Y_{2}$ is missing. Let $OI(i,i') = (\omega_{n}(i,i'),~\omega_{m}(i,i'))$ and $OI(j,j') = (\nu_{n}(j,j'),~\nu_{m}(j,j'))$. Note that the interval $OI(i,i')$ contains $\omega_{j}(i,i')$, while the interval $OI(j,j')$ contains $\nu_{i}(j,j')$. 

We now study the behaviour of the above odds under Models M1-M9. More specifically, we investigate the conditions under which the non-response odds belong to the open intervals formed by the response odds. These conditions help us to characterize the MAR, MCAR and NMAR mechanisms of a variable, which are useful for their assessment and hence model selection based only on the observed cell counts in the incomplete table. 

Let $m\in\{1,\ldots,J\}$ denote the level of $Y_{2}$ corresponding to $\omega_{m}(i,i')$. Then define 
\begin{equation}\label{b1} 
B_{m}(i,i') = \frac{\sum_{j}\exp\{\lambda_{Y_{2}}(j) + \lambda_{Y_{1}Y_{2}}(i',j) + \lambda_{Y_{1}Y_{2}}(i,m) + \lambda_{Y_{2}R_{1}}(j,1)\}}{\sum_{j}\exp\{\lambda_{Y_{2}}(j) + \lambda_{Y_{1}Y_{2}}(i,j) + \lambda_{Y_{1}Y_{2}}(i',m) + \lambda_{Y_{2}R_{1}}(j,1)\}}.  
\end{equation}
Let $B^{\ast}_{m}(i,i')$ denote $B_{m}(i,i')$ under Model M5 and $B_{m}(i,i')$ with $\lambda_{Y_{2}R_{1}}(j,1)=0$ under Models M2 and M7. Similarly, define $B^{\ast}_{n}(i,i')$. From Appendix A, we observe that $B^{\ast}_{m}(i,i') > 1$ and $B^{\ast}_{n}(i,i') < 1$ for any pair $(i,i')$ of $Y_{1}$. Then the next theorem characterizes the missing data mechanisms of $Y_{2}$ in an $I\times J\times 2\times 2$ table.
\begin{theorem}\label{th2.2}
Under Models M1-M9 for an $I\times J\times 2\times 2$ table, we have the following cases corresponding to the missing mechanism of $Y_{2}$. 
\begin{enumerate}
\item[(a)] If $Y_{2}$ has a MCAR or NMAR mechanism, then $\omega(i,i')\in OI(i,i')$ if 
$|\lambda_{Y_{2}R_{2}}(j,2)| < \infty$. 
\item[(b)] If $Y_{2}$ has a MAR mechanism, then only one of the following conditions holds for each pair $(i,i')$ of $Y_{1}$ : \\ 
(i) $\omega(i,i')\in OI(i,i')$ iff $-\frac{1}{2}\log B^{\ast}_{m}(i,i') < \lambda_{Y_{1}R_{2}}(i',2) - \lambda_{Y_{1}R_{2}}(i,2) < -\frac{1}{2}\log B^{\ast}_{n}(i,i')$, \\
(ii) $\omega(i,i')\not\in OI(i,i')$ iff $\lambda_{Y_{1}R_{2}}(i',2) - \lambda_{Y_{1}R_{2}}(i,2) > -\frac{1}{2}\log B^{\ast}_{n}(i,i')$ or $\lambda_{Y_{1}R_{2}}(i',2) - \lambda_{Y_{1}R_{2}}(i,2)  < -\frac{1}{2}\log B^{\ast}_{m}(i,i')$.
\end{enumerate}
\end{theorem}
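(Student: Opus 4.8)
The plan is to reduce both parts to one algebraic identity obtained by expanding the odds in terms of the log-linear parameters of \eqref{eq1}. First I would record, for any pair $(i,i')$ of $Y_1$ and any level $j$ of $Y_2$,
\[
\log\omega_j(i,i') \;=\; \delta(i,i') + \big(\lambda_{Y_1Y_2}(i,j)-\lambda_{Y_1Y_2}(i',j)\big),
\]
where $\delta(i,i'):=\big(\lambda_{Y_1}(i)-\lambda_{Y_1}(i')\big)+\big(\lambda_{Y_1R_1}(i,1)-\lambda_{Y_1R_1}(i',1)\big)-\big(\lambda_{Y_1R_2}(i,2)-\lambda_{Y_1R_2}(i',2)\big)$ collects the terms that do not depend on $j$ (using the zero-sum constraint $\lambda_{Y_1R_2}(i,1)=-\lambda_{Y_1R_2}(i,2)$). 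Setting $e_j(i,i'):=\lambda_{Y_1Y_2}(i,j)-\lambda_{Y_1Y_2}(i',j)$, the ordering of $\{\omega_j(i,i')\}_j$ is the ordering of $\{e_j(i,i')\}_j$, so $m$ and $n$ are the levels of $Y_2$ at which $e_j(i,i')$ is largest and smallest. Summing $\mu_{ij12}$ over $j$ and factoring out the part depending only on $i$ gives $\log\omega(i,i')=\delta(i,i')+2\big(\lambda_{Y_1R_2}(i,2)-\lambda_{Y_1R_2}(i',2)\big)+F(i,i')$, where $F(i,i'):=\log(S_i/S_{i'})$ and $S_i:=\sum_j\exp\{\lambda_{Y_2}(j)+\lambda_{Y_1Y_2}(i,j)+\lambda_{Y_2R_1}(j,1)+\lambda_{Y_2R_2}(j,2)\}$. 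Cancelling the common term $\delta(i,i')$ and exponentiating, $\omega(i,i')\in OI(i,i')$ is equivalent to
\[
e_n(i,i') \;<\; 2\big(\lambda_{Y_1R_2}(i,2)-\lambda_{Y_1R_2}(i',2)\big) + F(i,i') \;<\; e_m(i,i').
\]

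For part (a) I would specialise to the six models (M1, M3, M4, M6, M8, M9) in which $Y_2$ is MCAR or NMAR; in all of these $\lambda_{Y_1R_2}\equiv0$, so the middle term collapses to $F(i,i')$ and the condition reads $e_n(i,i')<F(i,i')<e_m(i,i')$. Writing $w_j:=\exp\{\lambda_{Y_2}(j)+\lambda_{Y_1Y_2}(i',j)+\lambda_{Y_2R_1}(j,1)+\lambda_{Y_2R_2}(j,2)\}$, which are strictly positive and finite precisely because $|\lambda_{Y_2R_2}(j,2)|<\infty$ (the remaining parameters being finite), one gets
\[
e^{F(i,i')} \;=\; \frac{S_i}{S_{i'}} \;=\; \frac{\sum_j w_j\, e^{e_j(i,i')}}{\sum_j w_j},
\]
a strict convex combination of the numbers $e^{e_j(i,i')}$. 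Hence $e^{F(i,i')}$ lies strictly between $e^{e_n(i,i')}$ and $e^{e_m(i,i')}$, i.e. $e_n(i,i')<F(i,i')<e_m(i,i')$, which is exactly $\omega(i,i')\in OI(i,i')$. (This uses that $OI(i,i')$ is non-degenerate, i.e. the $e_j(i,i')$ are not all equal; I would state this as a standing assumption.)

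For part (b) I would specialise to the three models (M2, M5, M7) in which $Y_2$ is MAR, so $\lambda_{Y_2R_2}\equiv0$ while $\lambda_{Y_1R_2}$ is retained. Then $S_i=\sum_j\exp\{\lambda_{Y_2}(j)+\lambda_{Y_1Y_2}(i,j)+\lambda_{Y_2R_1}(j,1)\}$, with $\lambda_{Y_2R_1}(j,1)$ present for M5 and equal to $0$ for M2 and M7 --- exactly the convention defining $B^\ast_m$, $B^\ast_n$ from \eqref{b1}. A direct computation from \eqref{b1} then yields $B^\ast_m(i,i')=\exp\{e_m(i,i')-F(i,i')\}$ and $B^\ast_n(i,i')=\exp\{e_n(i,i')-F(i,i')\}$, whence $-\tfrac12\log B^\ast_m(i,i')=\tfrac12(F(i,i')-e_m(i,i'))$ and similarly for $n$. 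Substituting into the displayed equivalence above and writing $2\big(\lambda_{Y_1R_2}(i,2)-\lambda_{Y_1R_2}(i',2)\big)=-2\big(\lambda_{Y_1R_2}(i',2)-\lambda_{Y_1R_2}(i,2)\big)$ gives part (b)(i); negating it, and discarding the two boundary equalities $\lambda_{Y_1R_2}(i',2)-\lambda_{Y_1R_2}(i,2)\in\{-\tfrac12\log B^\ast_m(i,i'),-\tfrac12\log B^\ast_n(i,i')\}$ (which put $\omega(i,i')$ exactly on an endpoint of $OI(i,i')$), gives part (b)(ii). Since (i) and (ii) are complementary, exactly one holds for each $(i,i')$.

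The log-linear bookkeeping is routine; the one place needing care is part (a), where the content is the observation that $e^{F(i,i')}$ is a \emph{strict} convex combination of the $e^{e_j(i,i')}$ --- this is precisely where the hypothesis $|\lambda_{Y_2R_2}(j,2)|<\infty$ does its work, by keeping all weights $w_j$ strictly positive so that no single level of $Y_2$ can drag the average onto an endpoint (a boundary solution). In part (b) the only subtlety is matching the signs in the identity $B^\ast_m(i,i')=\exp\{e_m(i,i')-F(i,i')\}$ and correctly excluding the two endpoint cases; I do not anticipate any genuine obstacle beyond tracking which of the nine submodels retains which interaction term.
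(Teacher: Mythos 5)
Your proof is correct and follows essentially the same route as the paper's: both reduce the membership condition $\omega(i,i')\in OI(i,i')$ to the observation that the non-response odds is a weighted average (with weights positive precisely when $|\lambda_{Y_{2}R_{2}}(j,2)|<\infty$) of the response odds, shifted by $2(\lambda_{Y_{1}R_{2}}(i,2)-\lambda_{Y_{1}R_{2}}(i',2))$ when $Y_2$ is MAR, and both identify $B^{\ast}_{m}, B^{\ast}_{n}$ as the resulting ratios. The only difference is organizational — you carry out the computation once under the full model \eqref{eq1} and specialize, whereas the paper works model by model (M3, M5, then the rest by analogy) — and you are slightly more explicit about the non-degeneracy of $OI(i,i')$ and the discarded endpoint cases in (b)(ii).
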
 
\begin{proof}
See Appendix A.
\end{proof}
Now let $m\in\{1,\ldots,I\}$ denote the level of $Y_{1}$ corresponding to $\nu_{m}(j,j')$. Then define 
\begin{equation}\label{a1}
A_{m}(j,j') = \frac{\sum_{i}\exp\{\lambda_{Y_{1}}(i) + \lambda_{Y_{1}Y_{2}}(i,j') + \lambda_{Y_{1}Y_{2}}(m,j) + \lambda_{Y_{1}R_{2}}(i,1)\}}{\sum_{i}\exp\{\lambda_{Y_{1}}(i) + \lambda_{Y_{1}Y_{2}}(i,j) + \lambda_{Y_{1}Y_{2}}(m,j') + \lambda_{Y_{1}R_{2}}(i,1)\}}.
\end{equation}
Let $A^{\ast}_{m}(j,j')$ denote $A_{m}(j,j')$ under Model M5 and $A_{m}(j,j')$ with $\lambda_{Y_{1}R_{2}}(i,1)=0$ under Models M4 and M6. Similarly, define $A^{\ast}_{n}(j,j')$. From Appendix B, we have $A^{\ast}_{m}(j,j') > 1$ and $A^{\ast}_{n}(j,j') < 1$ for any pair $(j,j')$ of $Y_{2}$. Then we have the following characterization for the missing data mechanisms of $Y_{1}$ in an $I\times J\times 2\times 2$ table.

\begin{theorem}\label{th2.1}
Under Models M1-M9 for an $I\times J\times 2\times 2$ table, we have the following cases corresponding to the missing mechanism of $Y_{1}$.
\begin{enumerate}
\item[(a)] If $Y_{1}$ has a MCAR or NMAR mechanism, then $\nu(j,j')\in OI(j,j')$ if $|\lambda_{Y_{1}R_{1}}(i,2)| < \infty$. 
\item[(b)] If $Y_{1}$ has a MAR mechanism, then only one of the following conditions holds for each pair $(j,j')$ of $Y_{2}$ : \\
(i) $\nu(j,j')\in OI(j,j')$ iff $-\frac{1}{2}\log A^{\ast}_{m}(j,j') < \lambda_{Y_{2}R_{1}}(j',2) - \lambda_{Y_{2}R_{1}}(j,2) < -\frac{1}{2}\log A^{\ast}_{n}(j,j')$, \\
(ii) $\nu(j,j')\not\in OI(j,j')$ iff $\lambda_{Y_{2}R_{1}}(j',2) - \lambda_{Y_{2}R_{1}}(j,2) > -\frac{1}{2}\log A^{\ast}_{n}(j,j')$ or $\lambda_{Y_{2}R_{1}}(j',2) - \lambda_{Y_{2}R_{1}}(j,2)  < -\frac{1}{2}\log A^{\ast}_{m}(j,j')$.
\end{enumerate} 
\end{theorem}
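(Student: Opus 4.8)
The plan is to mirror the proof of Theorem~\ref{th2.2}, interchanging the roles of the two variables and of the two missing indicators: rows $\leftrightarrow$ columns, $Y_1\leftrightarrow Y_2$, $R_1\leftrightarrow R_2$, the response odds $\omega_j(i,i')\leftrightarrow\nu_i(j,j')$, the non-response odds $\omega(i,i')\leftrightarrow\nu(j,j')$, and the auxiliary ratios $B_m(i,i')\leftrightarrow A_m(j,j')$, $B_n\leftrightarrow A_n$. For each of M1--M9 I would substitute the model's log-linear expansion into $\pi_{ij11}$ and into the cells $\mu_{ij21}$ making up the supplemental margin in which $Y_1$ is missing. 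In the ratio defining $\nu_i(j,j')$ the common $i$-factors cancel, leaving $\nu_i(j,j')=c(j,j')\exp\{\lambda_{Y_1Y_2}(i,j)-\lambda_{Y_1Y_2}(i,j')\}$ with $c(j,j')$ free of $i$; hence the level $m$ attaining $\nu_m(j,j')$ is exactly the maximizer of $\lambda_{Y_1Y_2}(i,j)-\lambda_{Y_1Y_2}(i,j')$, and likewise $n$ is the minimizer. On the other side, $\pi_{+j21}$ equals a $j$-factor times $\sum_i\exp\{\lambda_{Y_1}(i)+\lambda_{Y_1Y_2}(i,j)+\cdots\}$, where the tail $\cdots$ carries $\lambda_{Y_1R_1}(i,2)$ and/or $\lambda_{Y_1R_2}(i,1)$ only for the models that retain those interactions.

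Next I would form $\nu_m(j,j')/\nu(j,j')$ and $\nu_n(j,j')/\nu(j,j')$. For the NMAR models M1--M3 and the MCAR models M7--M9 no $\lambda_{Y_2R_1}$ term occurs, any $\lambda_{Y_2R_2}$ terms present cancel between numerator and denominator, and what remains is a ratio of two sums of strictly positive terms. Because $m$ maximizes the response odds, the one-line rearrangement estimate used twice in the proof of Theorem~\ref{th2.2} (``$\omega_j<\omega_m\Rightarrow\cdots>1$'' and ``$\omega_j>\omega_n\Rightarrow\cdots<1$'') applies verbatim and gives $\nu_m(j,j')/\nu(j,j')>1$, $\nu_n(j,j')/\nu(j,j')<1$, so $\nu(j,j')\in OI(j,j')$; the only hypothesis needed for the sums to be finite and positive is $|\lambda_{Y_1R_1}(i,2)|<\infty$, which is vacuous for M7--M9, where $\lambda_{Y_1R_1}\equiv0$. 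This proves part (a).

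For the MAR models M4--M6 the interaction $\lambda_{Y_2R_1}$ survives, entering $\nu_i(j,j')$ at $R_1=1$ and $\nu(j,j')$ at $R_1=2$; by the sum-to-zero constraint $\lambda_{Y_2R_1}(j,1)=-\lambda_{Y_2R_1}(j,2)$ these collapse into the single factor $\exp\{2(\lambda_{Y_2R_1}(j',2)-\lambda_{Y_2R_1}(j,2))\}$, while the residual ratio of positive sums is precisely $A_m(j,j')$ (resp.\ $A_n(j,j')$); any $\lambda_{Y_2R_2}$ terms (M6) again cancel, and whether the summand weights carry $\lambda_{Y_1R_2}(i,1)$ (M5) or not (M4, M6) is exactly the passage from $A_m$ to $A^{\ast}_m$. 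Thus $\nu_m(j,j')/\nu(j,j')=\exp\{2(\lambda_{Y_2R_1}(j',2)-\lambda_{Y_2R_1}(j,2))\}\,A^{\ast}_m(j,j')$ and $\nu_n(j,j')/\nu(j,j')=\exp\{2(\lambda_{Y_2R_1}(j',2)-\lambda_{Y_2R_1}(j,2))\}\,A^{\ast}_n(j,j')$, with $A^{\ast}_m(j,j')\ge1\ge A^{\ast}_n(j,j')$ by the same estimate. Solving $\nu_m/\nu>1$ and $\nu_n/\nu<1$ for $\lambda_{Y_2R_1}(j',2)-\lambda_{Y_2R_1}(j,2)$ yields the two-sided inequality in (b)(i); since $-\frac12\log A^{\ast}_m(j,j')\le0\le-\frac12\log A^{\ast}_n(j,j')$, the complement is exactly (b)(ii) and precisely one alternative holds.

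The substantive effort here is organizational rather than conceptual: for each of the nine models one must track correctly which of $\lambda_{Y_1R_1},\lambda_{Y_2R_1},\lambda_{Y_1R_2},\lambda_{Y_2R_2}$ is retained, apply the sum-to-zero identities consistently, and note the mild regularity (finiteness of the ``kept'' parameters) under which $\nu(j,j')$ and the relevant sums are well defined. A minor point worth recording is that the above inequalities are strict precisely when the response odds $\nu_i(j,j')$ are not all equal in $i$ — the generic situation implicit in using the open interval $OI(j,j')$ — so the argument goes through with open, rather than closed, intervals. With these in place the conclusion follows either directly or simply by invoking the symmetry with Theorem~\ref{th2.2}.
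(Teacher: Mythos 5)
Your proposal is correct and follows essentially the same route as the paper's own proof: substitute each model's log-linear expansion into the response odds $\nu_i(j,j')$ and the supplemental-margin odds $\nu(j,j')$, cancel the common factors (including any $\lambda_{Y_2R_2}$ terms), apply the rearrangement estimate to the resulting ratios of positive sums to get $\nu_m/\nu>1$ and $\nu_n/\nu<1$ for the NMAR/MCAR models of $Y_1$, and for M4--M6 extract the factor $\exp\{2(\lambda_{Y_2R_1}(j',2)-\lambda_{Y_2R_1}(j,2))\}$ times $A^{\ast}_m$ or $A^{\ast}_n$ and solve the resulting inequalities. The paper organizes exactly this computation by working out M3 and M5 in detail and tabulating the remaining models, so your argument matches it in substance.
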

\begin{proof}
See Appendix B.
\end{proof}

\begin{remark}\label{rem13}
From Theorems \ref{th2.2}(a) and \ref{th2.1}(a), note that if the missing mechanism of $Y_{1}$ or $Y_{2}$ is NMAR or MCAR, then $\nu(j,j')\in OI(j,j')$ for any pair $(j,j')$ of $Y_{2}$ and $\omega(i,i')\in OI(i,i')$ for any pair $(i,i')$ of $Y_{1}$. Also, from Theorem \ref{th2.2}(b), if there exists at least one pair $(i,i')$ of $Y_{1}$ such that $\omega(i,i')\not\in OI(i,i')$, then $|\lambda_{Y_{1}R_{2}}(i',2) - \lambda_{Y_{1}R_{2}}(i,2)|$ is larger than that when $\omega(i,i')\in OI(i,i')$. We say that the missing mechanism of $Y_{2}$ is strong MAR in the first case and non-strong (weak) in the second one. Similar results hold for the MAR mechanism of $Y_{1}$. 
\end{remark}
\begin{remark}\label{rem14}
If $I = J = 2$, then we have a $2\times 2\times 2\times 2$ table. Under Models M1-M9,
\begin{equation*}
\lambda_{Y_{1}Y_{2}}(1,1) = \frac{1}{4}\log\left[\frac{\nu_{1}(1,2)}{\nu_{2}(1,2)}\right] = \frac{1}{4}\log\left[\frac{\omega_{1}(1,2)}{\omega_{2}(1,2)}\right].
\end{equation*}
Hence, for fixed $\pi$, the length of $OI(1,2) = |\nu_{1}(1,2)-\nu_{2}(1,2)|$ and that of $OI'(1,2) = |\omega_{1}(1,2)-\omega_{2}(1,2)|$ or equivalently the sizes of the parameter regions for the weak MAR mechanisms of $Y_{1}$ and $Y_{2}$ respectively are each directly proportional to $|\lambda_{Y_{1}Y_{2}}(1,1)|$, the strength of the association between $Y_{1}$ and $Y_{2}$.     
\end{remark}

\subsection{Assessment of the MCAR, NMAR and MAR mechanisms}

A perfect fit model is one in which the estimated expected counts are equal to the observed counts. It is known that Models M1, M4, M7, M8 and M9 do not provide perfect fits for observed counts in the tables (see Table II on p.~647 of Baker {\it et al.} (1992)). However, Models M2, M3, M5 and M6 are perfect fit models so that $\widehat{\pi}_{ij11} = y_{ij11}/N$, $\widehat{\pi}_{i+12} = y_{i+12}/N$ and $\widehat{\pi}_{+j21} = y_{+j21}/N$ (see Table II on p.~648 of Baker {\it et al.} (1992)). Hence, the estimators of the various odds under them are as follows. 
\begin{eqnarray*}
\widehat{\nu}_{i}(j,j') &=& \frac{y_{ij11}}{y_{ij'11}},~\widehat{\nu}_{n}(j,j') = \min_{i}\{\widehat{\nu}_{i}(j,j')\},~\widehat{\nu}_{m}(j,j') = \max_{i}\{\widehat{\nu}_{i}(j,j')\},~\widehat{\nu}(j,j') = \frac{y_{+j21}}{y_{+j'21}}; \nonumber \\
\widehat{\omega}_{j}(i,i') &=& \frac{y_{ij11}}{y_{i'j11}},~\widehat{\omega}_{n}(i,i') = \min_{j}\{\widehat{\omega}_{j}(i,i')\},~\widehat{\omega}_{m}(i,i') = \max_{j}\{\widehat{\omega}_{j}(i,i')\},~\widehat{\omega}(i,i') = \frac{y_{i+12}}{y_{i'+12}}. \nonumber
\end{eqnarray*}
Note that the estimated expected counts and hence the MLE's of the response and the non-response odds under non-perfect fit models are non-trivial functions of the observed counts in the tables. For example, the estimated cell probabilities under Model M1 (see p.~647 in Baker {\it et al.} (1992)) are
\begin{equation*}
\hat{\pi}_{ij11} = \frac{y_{ij11}y_{i+1+}y_{++11}}{Ny_{i+11}y_{++1+}},\quad\hat{\pi}_{+j21}=\frac{y_{+j21}}{N},\quad\hat{\pi}_{i+12} = \frac{y_{++12}\sum_{j}\hat{\pi}_{ij11}}{y_{++11}} = \frac{y_{i+1+}y_{++12}}{Ny_{++1+}}.
\end{equation*}
Hence, the MLE's of the response and non-response odds under Model M1 are
\begin{eqnarray*}
\widehat{\nu}_{i}(j,j') &=& \frac{y_{ij11}}{y_{ij'11}},~\widehat{\nu}_{n}(j,j') = \min_{i}\{\widehat{\nu}_{i}(j,j')\},~\widehat{\nu}_{m}(j,j') = \max_{i}\{\widehat{\nu}_{i}(j,j')\},~\widehat{\nu}(j,j') = \frac{y_{+j21}}{y_{+j'21}}; \nonumber \\
\widehat{\omega}_{j}(i,i') &=& \frac{y_{ij11}y_{i+1+}y_{i'+11}}{y_{i'j11}y_{i+11}y_{i'+1+}},~\widehat{\omega}_{n}(i,i') = \min_{j}\{\widehat{\omega}_{j}(i,i')\},~\widehat{\omega}_{m}(i,i') = \max_{j}\{\widehat{\omega}_{j}(i,i')\}, \nonumber \\
\widehat{\omega}(i,i') &=& \frac{\sum_{j}{\hat{\pi}_{ij11}}}{\sum_{j}\hat{\pi}_{i'j11}} = \frac{y_{i+1+}}{y_{i'+1+}}. \nonumber
\end{eqnarray*}
Using Table II on p.~647 of Baker {\it et al.} (1992), the MLE's of the response and non-response odds under other non-perfect fit models  may be obtained. It can be shown that the estimators of $\nu$'s for Models M1 and M4 are the same as those for the perfect fit models, while the estimators of $\omega$'s for Models M7 and M8 and those for the the perfect fit models are identical. Now let $\widehat{OI}(i,i') = (\widehat{\omega}_{n}(i,i'),~\widehat{\omega}_{m}(i,i'))$ and $\widehat{OI}(j,j') = (\widehat{\nu}_{n}(j,j'),~\widehat{\nu}_{m}(j,j'))$. Then the corollary below follows from Theorems \ref{th2.1}(a) and \ref{th2.2}(a), and Remark \ref{rem13}.
\begin{corollary}\label{cor2.1}
For an $I\times J\times 2\times 2$ table, if $\widehat{\omega}(i,i')\not\in\widehat{OI}(i,i')$ for at least one pair $(i,i')$ of $Y_{1}$ or $\widehat{\nu}(j,j')\not\in\widehat{OI}(j,j')$ for at least one pair $(j,j')$ of $Y_{2}$, then the plausible missing data mechanism of $Y_{1}$ or $Y_{2}$ is MAR and not NMAR or MCAR. 
\end{corollary}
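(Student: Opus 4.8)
The plan is to obtain the Corollary as the sample-level contrapositive of Theorems \ref{th2.2}(a) and \ref{th2.1}(a); the only extra ingredient is the observation that the hatted odds appearing in the statement are legitimate, essentially model-free, estimators of the population odds built from $\pi$.

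First I would record that $\widehat{\omega}_{j}(i,i') = y_{ij11}/y_{i'j11}$ and $\widehat{\omega}(i,i') = y_{i+12}/y_{i'+12}$ are ratios of observed cell and supplemental-margin counts, whose scaled versions $y_{ij11}/N$ and $y_{i+12}/N$ converge to $\pi_{ij11}$ and $\pi_{i+12}$; hence $\widehat{\omega}_{j}(i,i')$ and $\widehat{\omega}(i,i')$ are consistent for $\omega_{j}(i,i')$ and $\omega(i,i')$, so $\widehat{OI}(i,i') = (\widehat{\omega}_{n}(i,i'),\widehat{\omega}_{m}(i,i'))$ is consistent for $OI(i,i')$, and likewise $\widehat{\nu}_{i}(j,j')$, $\widehat{\nu}(j,j')$ and $\widehat{OI}(j,j')$ are consistent for $\nu_{i}(j,j')$, $\nu(j,j')$ and $OI(j,j')$. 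The crucial point is that this holds irrespective of which of M1--M9 generated the data; for the perfect-fit models M2, M3, M5 and M6 these raw ratios are moreover the MLEs, via $\widehat{\pi}_{ij11}=y_{ij11}/N$, $\widehat{\pi}_{i+12}=y_{i+12}/N$ and $\widehat{\pi}_{+j21}=y_{+j21}/N$, whereas for the non-perfect-fit models the MLEs differ (as the displayed formulas for M1 and M8 show), but for the diagnostic we only need the raw ratios to target the population odds.

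Next I would invoke Theorem \ref{th2.2}(a): if the true mechanism of $Y_{2}$ is MCAR or NMAR --- i.e.\ the generating model is one of M1, M3, M4, M6, M8, M9 --- then, provided $|\lambda_{Y_{2}R_{2}}(j,2)|<\infty$, one has $\omega(i,i')\in OI(i,i')$ for every pair $(i,i')$ of $Y_{1}$. Taking the contrapositive, and using that MCAR, MAR and NMAR exhaust the mechanisms of a variable: if $\omega(i,i')\notin OI(i,i')$ for even one pair, then $Y_{2}$ is neither MCAR nor NMAR, hence MAR. Remark \ref{rem13} both confirms this dichotomy and locates it inside Theorem \ref{th2.2}(b), since such a pair forces $|\lambda_{Y_{1}R_{2}}(i',2)-\lambda_{Y_{1}R_{2}}(i,2)|$ to be comparatively large (the weak-MAR regime). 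Passing from population quantities to their consistent estimators, $\widehat{\omega}(i,i')\notin\widehat{OI}(i,i')$ for at least one pair $(i,i')$ of $Y_{1}$ is the empirical signature of $\omega(i,i')\notin OI(i,i')$, so the plausible mechanism of $Y_{2}$ is MAR and not MCAR or NMAR. The implication triggered by $\widehat{\nu}(j,j')\notin\widehat{OI}(j,j')$ is obtained identically, reading Theorem \ref{th2.1}(a) and the finiteness condition $|\lambda_{Y_{1}R_{1}}(i,2)|<\infty$ in place of Theorem \ref{th2.2}(a) and $|\lambda_{Y_{2}R_{2}}(j,2)|<\infty$, to conclude that the plausible mechanism of $Y_{1}$ is MAR.

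Two points deserve a word rather than constituting a genuine obstacle. First, the argument passes from the sample relation $\widehat{\omega}(i,i')\notin\widehat{OI}(i,i')$ to the population relation $\omega(i,i')\notin OI(i,i')$; this statistical gap is why the Corollary speaks of the \emph{plausible} mechanism, and in the data analyses it is supported by bootstrapping. Second, Theorems \ref{th2.1}(a) and \ref{th2.2}(a) carry the finiteness hypotheses $|\lambda_{Y_{1}R_{1}}(i,2)|<\infty$ and $|\lambda_{Y_{2}R_{2}}(j,2)|<\infty$: were a boundary solution to occur, one of these log-linear parameters would diverge and the position of $\omega(i,i')$ or $\nu(j,j')$ relative to the endpoints of its odds interval could degenerate, breaking the clean contrapositive. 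As noted in the Introduction this is harmless here, since the models this diagnostic points to are exactly those without boundary solutions.
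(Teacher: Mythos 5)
Your proposal is correct and follows essentially the same route as the paper: the corollary is read off as the contrapositive of Theorems \ref{th2.1}(a) and \ref{th2.2}(a) together with Remark \ref{rem13}, using the exhaustiveness of the MCAR/MAR/NMAR trichotomy and the fact that the hatted odds estimate the population odds. Your added remarks on consistency and on the finiteness hypotheses are elaborations of what the paper leaves implicit, not a different argument.
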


\begin{remark}\label{rem15} 
Note that only observed counts or their functions are used in Corollary \ref{cor2.1} to assess the MCAR, MAR and NMAR mechanisms of the variables in an $I\times J\times 2\times 2$ table. Also, our results in this section along with those obtained by Kim {\it et al.} (2015) completely characterize the missing mechanisms of variables in two-way incomplete tables.
\end{remark}

\section{Missing data models for three-way incomplete tables}

In this section, we propose log-linear models for three-way incomplete tables and study missing data mechanisms of the variables using these models. All possible cases in which data on one, two or all variables may be missing are considered. We also develop sensitivity analysis for such tables. Suppose $Y_{1}$, $Y_{2}$ and $Y_{3}$ are three categorical variables with $I$, $J$ and $K$ levels respectively. Then we have the following cases.

\subsection{Case 1: Missing in one of the variables}
 
Without loss of generality (WLOG), let data on $Y_{1}$ be missing and $R$ denote the missing indicator for $Y_{1}$ such that $R = 1$ if $Y_{1}$ is observed and $R = 2$ otherwise. Then for $Y_{1},~Y_{2},~Y_{3}$ and $R,$ we have an $I\times J\times K\times 2$ table with cell counts ${\bf y} = \{y_{ijkr}\}$, where $1\leq i\leq I,~1\leq j\leq J,~1\leq k\leq K$ and $r = 1,2.$ The vector of observed counts is ${\bf y_{\textrm{obs}}} = (\{y_{ijk1}\},\{y_{+jk2}\})$, where $\{y_{ijk1}\}$ are the fully observed counts and $\{y_{+jk2}\}$ are the supplementary margins with `+' representing summation over levels of the corresponding variable. 
For $I=J=K=2$, the $2\times 2\times 2\times 2$ incomplete table is given by Table \ref{t4}. 
\begin{table}[ht]
\caption{$2\times 2\times 2\times 2$ Incomplete Table}\label{t4}
\begin{center}
$
\begin{array}{|c|c|c|cc|}\hline
& & & Y_{3} = 1 & Y_{3} = 2 \\ \hline
R = 1 & Y_{1} = 1 & Y_{2} = 1 & y_{1111} & y_{1121} \\   
& & Y_{2} = 2 & y_{1211} & y_{1221} \\ \hline 
& Y_{1} = 2 & Y_{2} = 1 & y_{2111} & y_{2121} \\   
& & Y_{2} = 2 & y_{2211} & y_{2221} \\ \hline 
R = 2 & \text{Missing} & Y_{2} = 1 & y_{+112} & y_{+122} \\   
& & Y_{2} = 2 & y_{+212} & y_{+222} \\ \hline 
\end{array}
$
\end{center} 
\end{table}
Let ${\bf\pi} = \{\pi_{ijkr}\}$ be the vector of cell probabilities, $\mu = \{\mu_{ijkr}\}$ be the vector of expected counts and $N = \sum_{i,j,k,r} y_{ijkr}$ be the total cell count. Under Poisson sampling for observed cell counts, the log-likelihood kernel of ${\bf\mu}$ is 
\begin{equation}\label{eq3.1}
l({\bf\mu};{\bf y}_{\textrm{obs}}) = \sum_{i,j,k}y_{ijk1}\log \mu_{ijk1} + \sum_{j,k}y_{+jk2}\log \mu_{+jk2} - \sum_{i,j,k,r}\mu_{ijkr}.
\end{equation}
The log-linear model is (see Ghosh and Vellaisamy (2016a)) 
\begin{eqnarray}\label{eq2}
\log \mu_{ijkr} &=& \lambda + \lambda_{Y_{1}}(i) + \lambda_{Y_{2}}(j) + \lambda_{Y_{3}}(k) + \lambda_{R}(r) +\lambda_{Y_{1}Y_{2}}(i,j) + \lambda_{Y_{1}Y_{3}}(i,k) + \lambda_{Y_{2}Y_{3}}(j,k) \nonumber \\
& &  + \lambda_{Y_{1}R}(i,r) + \lambda_{Y_{2}R}(j,r) + \lambda_{Y_{3}R}(k,r). 
\end{eqnarray}
We avoid higher order interactions in (\ref{eq2}) since they are difficult to interpret and obtaining closed-form maximum likelihood estimates of the parameters becomes complicated. Each log-linear parameter in (\ref{eq2}) satisfies the constraint that the sum over each of its arguments is 0. It is assumed in this case and subsequent ones that the missing mechanism of a variable may depend on itself (NMAR) or on one of the other variables (MAR) or none (MCAR). Accordingly, the various missing data models, which are submodels of (\ref{eq2}), are as follows. 
\vone\noindent
C1. NMAR for $Y_{1}$ :
\begin{equation*}
\log\mu_{ijkr} = \lambda + \lambda_{Y_{1}}(i) + \lambda_{Y_{2}}(j) + \lambda_{Y_{3}}(k) + \lambda_{R}(r) + \lambda_{Y_{1}Y_{2}}(i,j) + \lambda_{Y_{1}Y_{3}}(i,k) + \lambda_{Y_{2}Y_{3}}(j,k) + \lambda_{Y_{1}R}(i,r)
\end{equation*}
C2. MAR for $Y_{1}$ (missing mechanism depends on $Y_{2}$) :
\begin{equation*}
\log\mu_{ijkr} = \lambda + \lambda_{Y_{1}}(i) + \lambda_{Y_{2}}(j) + \lambda_{Y_{3}}(k) + \lambda_{R}(r) + \lambda_{Y_{1}Y_{2}}(i,j) + \lambda_{Y_{1}Y_{3}}(i,k) + \lambda_{Y_{2}Y_{3}}(j,k) + \lambda_{Y_{2}R}(j,r)
\end{equation*}
C3. MAR for $Y_{1}$ (missing mechanism depends on $Y_{3}$) :
\begin{equation*}
\log\mu_{ijkr} = \lambda + \lambda_{Y_{1}}(i) + \lambda_{Y_{2}}(j) + \lambda_{Y_{3}}(k) + \lambda_{R}(r) + \lambda_{Y_{1}Y_{2}}(i,j) + \lambda_{Y_{1}Y_{3}}(i,k) + \lambda_{Y_{2}Y_{3}}(j,k) + \lambda_{Y_{3}R}(k,r)
\end{equation*}
C4. MCAR for $Y_{1}$ :
\begin{equation*}
\log\mu_{ijkr} = \lambda + \lambda_{Y_{1}}(i) + \lambda_{Y_{2}}(j) + \lambda_{Y_{3}}(k) + \lambda_{R}(r) + \lambda_{Y_{1}Y_{2}}(i,j) + \lambda_{Y_{1}Y_{3}}(i,k) + \lambda_{Y_{2}Y_{3}}(j,k)
\end{equation*} 
Note that for each of the above models, there is an association term between a variable and its missing indicator if the missing mechanism is NMAR for that variable (for example, the term $\lambda_{Y_{1}R}(i,r)$ in Model C1), between its missing indicator and some other variable if the missing mechanism is MAR for that variable (for example, the term $\lambda_{Y_{2}R}(j,r)$ in Model C2) and none if the missing mechanism is MCAR for a variable (for example, $\lambda_{Y_{1}R}(i,r)$, $\lambda_{Y_{2}R}(j,r)$ and $\lambda_{Y_{3}R}(k,r)$ are absent in Model C4). This follows from the definitions of the missing mechanisms.

\subsubsection{Properties of the missing data models}
 
Define the following odds for any pair $(j,j')$ of $Y_{2}$ and $1\leq k\leq K$ :
\begin{equation*}
\nu_{ik}(j,j') = \frac{\pi_{ijk1}}{\pi_{ij'k1}},~\nu_{nk}(j,j') = \min_{i}\{\nu_{ik}(j,j')\},~\nu_{mk}(j,j') = \max_{i}\{\nu_{ik}(j,j')\},~\nu_{k}(j,j') = \frac{\pi_{+jk2}}{\pi_{+j'k2}}.
\end{equation*}
Similarly, define the following odds for any pair $(k,k')$ of $Y_{3}$ and $1\leq j\leq J$ :
\begin{equation*}
\nu_{ij}(k,k') = \frac{\pi_{ijk1}}{\pi_{ijk'1}},~\nu_{nj}(k,k') = \min_{i}\{\nu_{ij}(k,k')\},~\nu_{mj}(k,k') = \max_{i}\{\nu_{ij}(k,k')\},~\nu_{j}(k,k') = \frac{\pi_{+jk2}}{\pi_{+jk'2}}.
\end{equation*}
Let $OI_{k}(j,j') = (\nu_{nk}(j,j'),~\nu_{mk}(j,j'))$ and $OI_{j}(k,k') = (\nu_{nj}(k,k'),~\nu_{mj}(k,k'))$. Then the following two results characterize the various missing mechanisms of a variable in an $I\times J\times K\times 2$ table, which prove useful for their evaluation and hence model selection based only on the observed cell counts. 
\begin{theorem}\label{th3.1}
Suppose $Y_{1}$ has a NMAR or MCAR mechanism in an $I\times J\times K\times 2$ table. Then $\nu_{k}(j,j')\in OI_{k}(j,j')$ and $\nu_{j}(k,k')\in OI_{j}(k,k')$ if $|\lambda_{Y_{1}R}(i,2)| < \infty$. 
\end{theorem}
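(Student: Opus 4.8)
The plan is to follow the method of Theorem~\ref{th2.2}(a), specialized to the only two models in which $Y_1$ has a NMAR or MCAR mechanism, namely Model~C1 and Model~C4, the latter being the special case of the former with $\lambda_{Y_1R}(i,r)$ dropped. I would fix a pair $(j,j')$ of $Y_2$ and a level $k$ of $Y_3$; the assertion for a pair $(k,k')$ of $Y_3$ and a level $j$ of $Y_2$ then follows from the same computation after interchanging the roles of $Y_2$ and $Y_3$.

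First I would read off $\pi_{ijkr}$ from the relevant submodel of~(\ref{eq2}). Since neither C1 nor C4 contains $\lambda_{Y_2R}$, the only $j$-dependent factors in $\pi_{ijk1}/\pi_{ij'k1}$ are the $Y_2$-terms, so that
\begin{equation*}
\nu_{ik}(j,j') = \exp\{\lambda_{Y_2}(j) - \lambda_{Y_2}(j') + \lambda_{Y_1Y_2}(i,j) - \lambda_{Y_1Y_2}(i,j') + \lambda_{Y_2Y_3}(j,k) - \lambda_{Y_2Y_3}(j',k)\},
\end{equation*}
which does not involve the missing-indicator parameters. For the non-response odds I would write $\pi_{+jk2} = \sum_i \pi_{ijk2}$ and factor out of the $i$-sum the part depending only on $i$ (and on the fixed $k$); setting $c_i = \exp\{\lambda_{Y_1}(i) + \lambda_{Y_1Y_3}(i,k) + \lambda_{Y_1R}(i,2)\} > 0$, with $\lambda_{Y_1R}(i,2) \equiv 0$ under C4, this gives
\begin{equation*}
\nu_k(j,j') = \frac{\sum_i c_i \exp\{\lambda_{Y_1Y_2}(i,j)\}}{\sum_i c_i \exp\{\lambda_{Y_1Y_2}(i,j')\}}\,\exp\{\lambda_{Y_2}(j) - \lambda_{Y_2}(j') + \lambda_{Y_2Y_3}(j,k) - \lambda_{Y_2Y_3}(j',k)\}.
\end{equation*}

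Next I would form $\nu_{mk}(j,j')/\nu_k(j,j')$ and $\nu_{nk}(j,j')/\nu_k(j,j')$, cancel the common $Y_2$-factor, and reduce each to a ratio of the two $i$-sums above in which one sum has been reweighted at the level of $Y_1$ attaining the maximum (respectively minimum) of $\nu_{ik}(j,j')$. Exactly as in the proof of Theorem~\ref{th2.2}, the fact that $\nu_{ik}(j,j') \le \nu_{mk}(j,j')$ for every $i$ translates into a term-by-term inequality between the numerator and denominator sums, forcing $\nu_{mk}(j,j')/\nu_k(j,j') > 1$; symmetrically, $\nu_{ik}(j,j') \ge \nu_{nk}(j,j')$ for every $i$ gives $\nu_{nk}(j,j')/\nu_k(j,j') < 1$. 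Together these sandwich $\nu_k(j,j')$ strictly inside $OI_k(j,j') = (\nu_{nk}(j,j'),\nu_{mk}(j,j'))$, and the identical manipulation with $j$ held fixed and $(k,k')$ varying — now factoring the $i$-sum through the weights $\exp\{\lambda_{Y_1}(i) + \lambda_{Y_1Y_2}(i,j) + \lambda_{Y_1R}(i,2)\}$ — yields $\nu_j(k,k') \in OI_j(k,k')$.

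The one point that genuinely needs the hypothesis $|\lambda_{Y_1R}(i,2)| < \infty$ is the passage to the weighted $i$-sums: the weights $c_i$ must be strictly positive and finite for the term-by-term comparison to be valid and for the resulting ratios to be well-defined numbers in $(0,\infty)$. If $\lambda_{Y_1R}(i,2)$ were $\pm\infty$ for some level of $Y_1$ (a boundary-type degeneracy), the sum defining $\pi_{+jk2}$ would collapse onto a single level of $Y_1$ and $\nu_k(j,j')$ would equal one of the extreme response odds $\nu_{ik}(j,j')$, landing on the boundary of $OI_k(j,j')$ rather than in its interior — precisely the case the finiteness assumption rules out. The remaining work is the same bookkeeping as in Section~2 and introduces no new difficulty.
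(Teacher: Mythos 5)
Your proposal is correct and follows essentially the same route as the paper's proof: write out the response and non-response odds under Models C1 and C4, factor the $i$-sums through positive weights containing $\lambda_{Y_1R}(i,2)$, and deduce $\nu_{mk}/\nu_k>1$ and $\nu_{nk}/\nu_k<1$ from the term-by-term comparison forced by $\nu_{nk}\le\nu_{ik}\le\nu_{mk}$, with the $(k,k')$ case by symmetry. Your closing remark on why finiteness of $\lambda_{Y_1R}(i,2)$ is needed (degeneracy collapses $\nu_k$ onto a boundary response odds) is a slightly more explicit justification than the paper gives, but the argument is the same.
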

\begin{proof}
See Appendix C.
\end{proof}
Let $m\in\{1,\ldots,I\}$ be the level of $Y_{1}$ corresponding to $\nu_{mk}(j,j')$. Then define 
\begin{equation}\label{a3}
A_{mk}(j,j') = \frac{\sum_{i}\exp\{\lambda_{Y_{1}}(i) + \lambda_{Y_{1}Y_{2}}(i,j') + \lambda_{Y_{1}Y_{2}}(m,j) + \lambda_{Y_{1}Y_{3}}(i,k)\}}{\sum_{i}\exp\{\lambda_{Y_{1}}(i) + \lambda_{Y_{1}Y_{2}}(i,j) + \lambda_{Y_{1}Y_{2}}(m,j') + \lambda_{Y_{1}Y_{3}}(i,k)\}}.
\end{equation}
Also, let $A_{mj}(k,k')$ be obtained from (\ref{a3}) by interchanging $j$ with $k$ and $j'$ with $k'$. Similarly, define $A_{nk}(j,j')$ and $A_{nj}(k,k')$. From Appendix D, we observe that $A_{mk}(j,j') > 1$ and $A_{mj}(k,k') > 1$, while $A_{nk}(j,j') < 1$ and $A_{nj}(k,k') < 1$. Henceforth, Condition (L1, L2) holds means both conditions L1 and L2 hold. 
\begin{theorem}\label{th3.2}
Suppose $Y_{1}$ has a MAR mechanism in an $I\times J\times K\times 2$ table. Then only one of the Conditions (1a,1b) and (2a,2b)  holds: 
\begin{enumerate}
\item[1a.] For each $k$ and each pair $(j,j')$ of $Y_{2}$, only one of the conditions below holds: \\
(i) $\nu_{k}(j,j')\in OI_{k}(j,j')$ iff $-\frac{1}{2}\log A_{mk}(j,j') < \lambda_{Y_{2}R}(j',2) - \lambda_{Y_{2}R}(j,2) < -\frac{1}{2}\log A_{nk}(j,j')$, \\
(ii) $\nu_{k}(j,j')\not\in OI_{k}(j,j')$ iff $\lambda_{Y_{2}R}(j',2) - \lambda_{Y_{2}R}(j,2) > -\frac{1}{2}\log A_{nk}(j,j')$ or $\lambda_{Y_{2}R}(j',2) - \lambda_{Y_{2}R}(j,2)  < -\frac{1}{2}\log A_{mk}(j,j')$,
\item[1b.] $\nu_{j}(k,k')\in OI_{j}(k,k')$; 
\item[2a.] $\nu_{k}(j,j')\in OI_{k}(j,j')$, 
\item[2b.] For each $j$ and each pair $(k,k')$ of $Y_{3}$, only one of the conditions below holds: \\
(i) $\nu_{j}(k,k')\in OI_{j}(k,k')$ iff $-\frac{1}{2}\log A_{mj}(k,k') < \lambda_{Y_{3}R}(k',2) - \lambda_{Y_{3}R}(k,2) < -\frac{1}{2}\log A_{nj}(k,k'),$ \\
(ii) $\nu_{j}(k,k')\not\in OI_{j}(k,k')$ iff $\lambda_{Y_{3}R}(k',2) - \lambda_{Y_{3}R}(k,2) > -\frac{1}{2}\log A_{nj}(k,k')$ or $\lambda_{Y_{3}R}(k',2) - \lambda_{Y_{3}R}(k,2)  < -\frac{1}{2}\log A_{mj}(k,k')$. 
\end{enumerate}
\end{theorem}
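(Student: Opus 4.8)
The plan is to mimic the structure of the proof of Theorem \ref{th2.2}, specializing the log-linear model (\ref{eq2}) to the three MAR submodels C2 and C3, and treating separately the odds for pairs of $Y_2$ (with $Y_3$-level $k$ fixed) and for pairs of $Y_3$ (with $Y_2$-level $j$ fixed). The overall dichotomy --- Condition (1a,1b) versus (2a,2b) --- reflects exactly which of $Y_2$ or $Y_3$ the missing mechanism of $Y_1$ depends on: under C2 (MAR through $Y_2$, so $\lambda_{Y_2R}(j,r)\neq 0$ but $\lambda_{Y_3R}(k,r)=0$) the $Y_3$-pair odds $\nu_j(k,k')$ behave like the NMAR/MCAR case of Theorem \ref{th3.1} and land in the interval automatically, giving 1b, while the $Y_2$-pair odds pick up a genuine $\lambda_{Y_2R}$ shift, giving 1a; under C3 the roles of $j,k$ are swapped, yielding (2a,2b).

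First I would fix Model C2 and compute, for a pair $(j,j')$ of $Y_2$ and fixed $k$, the response odds $\nu_{ik}(j,j')=\exp\{\lambda_{Y_2}(j)-\lambda_{Y_2}(j')+\lambda_{Y_1Y_2}(i,j)-\lambda_{Y_1Y_2}(i,j')+\lambda_{Y_2Y_3}(j,k)-\lambda_{Y_2Y_3}(j',k)\}$ and the non-response odds $\nu_k(j,j')=\pi_{+jk2}/\pi_{+j'k2}$, which involves the $\lambda_{Y_2R}(\cdot,2)$ terms (and no $\lambda_{Y_1R}$ term since that is absent in C2). Forming the ratios $\nu_{mk}(j,j')/\nu_k(j,j')$ and $\nu_{nk}(j,j')/\nu_k(j,j')$, the $\lambda_{Y_2R}(j,2)$ and $\lambda_{Y_2R}(j',2)$ factors come out of the sums as a common factor $\exp\{2(\lambda_{Y_2R}(j',2)-\lambda_{Y_2R}(j,2))\}$, leaving exactly the quantities $A_{mk}(j,j')$ and $A_{nk}(j,j')$ of (\ref{a3}) as the residual ratio. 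The implications $\nu_{ik}(j,j')<\nu_{mk}(j,j')\Rightarrow A_{mk}(j,j')>1$ and $\nu_{ik}(j,j')>\nu_{nk}(j,j')\Rightarrow A_{nk}(j,j')<1$ follow exactly as in Theorem \ref{th2.2}, and then $\nu_k(j,j')\in OI_k(j,j')\Leftrightarrow \nu_{mk}/\nu_k>1$ and $\nu_{nk}/\nu_k<1$ unwinds to the two-sided bound $-\frac12\log A_{mk}(j,j')<\lambda_{Y_2R}(j',2)-\lambda_{Y_2R}(j,2)<-\frac12\log A_{nk}(j,j')$, with the complementary statement (ii) being the logical negation --- this is 1a. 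For the $Y_3$-pairs under C2, the non-response odds $\nu_j(k,k')=\pi_{+jk2}/\pi_{+jk'2}$ carry no $\lambda_{Y_3R}$ dependence (absent in C2) and only the sum-over-$i$ structure matters, so the same argument as in Theorem \ref{th3.1} shows $\nu_j(k,k')\in OI_j(k,k')$ unconditionally (the relevant $\lambda_{Y_1R}$ is also absent in C2), giving 1b. Running the identical computation with Model C3 and the roles of $(j,Y_2,Y_3$-pairs$)$ and $(k,Y_3,Y_2$-pairs$)$ interchanged produces (2a,2b). Since exactly one of C2, C3 can hold for a given table, exactly one of (1a,1b), (2a,2b) holds.

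The routine part is the algebra of pulling the $\lambda_{Y_2R}$ or $\lambda_{Y_3R}$ factors through the sums; that is mechanical and parallels the displayed computations already given for M5. The main obstacle --- or at least the point needing care --- is bookkeeping the absence of terms: under C2 one must verify that in $\nu_k(j,j')$, $\nu_j(k,k')$, and all the ratios, the $\lambda_{Y_1R}$, $\lambda_{Y_3R}$ (and for $A_{mk}$, after cancellation, the $\lambda_{Y_2R}$) contributions genuinely drop out, so that $A_{mk},A_{nk}$ as defined in (\ref{a3}) --- which contain no $R$-interaction parameters at all --- are precisely what remains; an analogous check is needed under C3. One should also record, as in Theorem \ref{th3.1}, the implicit finiteness hypothesis (here $|\lambda_{Y_2R}(j,2)|<\infty$ under C2, resp. $|\lambda_{Y_3R}(k,2)|<\infty$ under C3) needed for the "iff" statements to be non-degenerate, and note that if the relevant $\lambda_{Y_2R}$ (resp. $\lambda_{Y_3R}$) vanish the MAR mechanism collapses to MCAR and one recovers Theorem \ref{th3.1}, consistent with Remark \ref{rem12}. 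Finally, I would summarize the conditions under C2 and C3 in a table analogous to Tables \ref{t2} and \ref{t3}, from which both halves of the statement read off directly.
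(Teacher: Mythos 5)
Your proposal follows essentially the same route as the paper's proof: under Model C2 the $Y_2$-pair ratios factor as $\exp\{2(\lambda_{Y_{2}R}(j',2)-\lambda_{Y_{2}R}(j,2))\}\times A_{mk}(j,j')$ (resp.\ $A_{nk}(j,j')$), the max/min definitions force $A_{mk}>1$ and $A_{nk}<1$, the $Y_3$-pair non-response odds land in $OI_{j}(k,k')$ automatically, and Model C3 is handled by interchanging the roles of $j$ and $k$, with exclusivity of (1a,1b) versus (2a,2b) coming from the assumption that the MAR mechanism of $Y_1$ depends on exactly one of $Y_2$, $Y_3$. One small correction: your displayed formula for $\nu_{ik}(j,j')$ under C2 should also contain the term $\lambda_{Y_{2}R}(j,1)-\lambda_{Y_{2}R}(j',1)$, since it is precisely the combination of this with the $\lambda_{Y_{2}R}(\cdot,2)$ terms appearing in $\nu_{k}(j,j')$, via the sum-to-zero constraint, that produces the factor of $2$ you correctly assert in the final bounds.
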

\begin{proof}
See Appendix D.
\end{proof}

\begin{remark}\label{rem23}
From Theorem \ref{th3.1}, note that if the missing mechanism of $Y_{1}$ is NMAR or MCAR, then $\nu_{k}(j,j')\in OI_{k}(j,j')$ for any pair $(j,j')$ of $Y_{2}$ and $\nu_{j}(k,k')\in OI_{j}(k,k')$ for any pair $(k,k')$ of $Y_{3}$. Also, from Theorem \ref{th3.2} (1a), if there exists at least one $k$ or at least one pair $(j,j')$ of $Y_{2}$ such that $\nu_{k}(j,j')\not\in OI_{k}(j,j')$, then $|\lambda_{Y_{2}R}(j',2) - \lambda_{Y_{2}R}(j,2)|$ is larger than that when $\nu_{k}(j,j')\in OI_{k}(j,j')$. We say that the missing mechanism of $Y_{1}$ is strong MAR in the first case and non-strong (weak) in the second one.
Similar results follow when we consider $\lambda_{Y_{3}R}(k,2)$'s in Theorem \ref{th3.2} (2b).
\end{remark}
\begin{remark}\label{rem24}
If $I = J = K = 2$, then we have a $2\times 2\times 2\times 2$ table. Under Models C2 and C3, we have
\begin{equation*}
\lambda_{Y_{1}Y_{2}}(1,1) = \frac{1}{4}\log\left[\frac{\nu_{1k}(1,2)}{\nu_{2k}(1,2)}\right],~\lambda_{Y_{1}Y_{3}}(1,1) = \frac{1}{4}\log\left[\frac{\nu_{1j}(1,2)}{\nu_{2j}(1,2)}\right].
\end{equation*}
Hence, for fixed $\pi$, the length of $OI_{k}(1,2) = |\nu_{1k}(1,2)-\nu_{2k}(1,2)|$ or that of $OI_{j}(1,2) = |\nu_{1j}(1,2)-\nu_{2j}(1,2)|$, that is, the size of the parameter region for the weak MAR mechanism of $Y_{1}$ is directly proportional to $|\lambda_{Y_{1}Y_{2}}(1,1)|$ or $|\lambda_{Y_{1}Y_{3}}(1,1)|$, the strength of the associations between $Y_{1}$ and $Y_{2}$ or $Y_{3}$ respectively. 
\end{remark}

\subsubsection{Assessment of the MCAR, NMAR and MAR mechanisms}
 
It can be shown that perfect fits for fully and partially observed counts occur under Model C1 and not under Models C2, C3 and C4 (see Ghosh and Vellaisamy (2016b)). This implies the MLE's are $\widehat{\pi}_{ijk1} = y_{ijk1}/N$ and $\widehat{\pi}_{+jk2} = y_{+jk2}/N$. Hence, the estimators of the various response and non-response odds under Model C1 are as follows.
\begin{eqnarray*}
\widehat{\nu}_{ik}(j,j') &=& \frac{y_{ijk1}}{y_{ij'k1}},\widehat{\nu}_{nk}(j,j') = \min_{i}\{\widehat{\nu}_{ik}(j,j')\},\widehat{\nu}_{mk}(j,j') = \max_{i}\{\widehat{\nu}_{ik}(j,j')\},\widehat{\nu}_{k}(j,j') = \frac{y_{+jk2}}{y_{+j'k2}}; \nonumber \\
\widehat{\nu}_{ij}(k,k') &=& \frac{y_{ijk1}}{y_{ijk'1}},\widehat{\nu}_{nj}(k,k') = \min_{i}\{\widehat{\nu}_{ij}(k,k')\},\widehat{\nu}_{mj}(k,k') = \max_{i}\{\widehat{\nu}_{ij}(k,k')\},\widehat{\nu}_{j}(k,k') = \frac{y_{+jk2}}{y_{+jk'2}}. \nonumber 
\end{eqnarray*}
The MLE's of the response and the non-response odds under non-perfect fit models are some functions of those under perfect fit models. For example, the estimated cell probabilities under Model C4 (see Ghosh and Vellaisamy (2016b)) are
\begin{equation*}
\hat{\pi}_{ijk1} = \frac{y_{ijk1}y_{+jk+}y_{+++1}}{Ny_{+jk1}y_{++++}},\quad\hat{\pi}_{+jk2} = \frac{y_{+++2}\sum_{i}\hat{\pi}_{ijk1}}{y_{+++1}} = \frac{y_{+++2}y_{+jk+}}{Ny_{++++}}.
\end{equation*}
Hence, the MLE's of the above odds under Model C4 are
\begin{eqnarray*}
\widehat{\nu}_{ik}(j,j') &=& \frac{y_{ijk1}y_{+jk+}y_{+j'k1}}{y_{ij'k1}y_{+j'k+}y_{+jk1}},\widehat{\nu}_{nk}(j,j') = \min_{i}\{\widehat{\nu}_{ik}(j,j')\},\widehat{\nu}_{mk}(j,j') = \max_{i}\{\widehat{\nu}_{ik}(j,j')\}, \nonumber \\
\widehat{\nu}_{k}(j,j') &=& \frac{\sum_{i}\hat{\pi}_{ijk1}}{\sum_{i}\hat{\pi}_{ij'k1}} = \frac{y_{+jk+}}{y_{+j'k+}}; \nonumber \\
\widehat{\nu}_{ij}(k,k') &=& \frac{y_{ijk1}y_{+jk+}y_{+jk'1}}{y_{ijk'1}y_{+jk'+}y_{+jk1}},\widehat{\nu}_{nj}(k,k') = \min_{i}\{\widehat{\nu}_{ij}(k,k')\},\widehat{\nu}_{mj}(k,k') = \max_{i}\{\widehat{\nu}_{ij}(k,k')\}, \nonumber \\
\widehat{\nu}_{j}(k,k') &=& \frac{\sum_{i}\hat{\pi}_{ijk1}}{\sum_{i}\hat{\pi}_{ijk'1}} = \frac{y_{+jk+}}{y_{+jk'+}}. \nonumber 
\end{eqnarray*}
The MLE's of the various odds for Models C2 and C3 can be obtained similarly. Denote the estimators of $OI_{k}(j,j')$ and $OI_{j}(k,k')$ by $\widehat{OI}_{k}(j,j') = (\widehat{\nu}_{nk}(j,j'),~\widehat{\nu}_{mk}(j,j'))$ and $\widehat{OI}_{j}(k,k') = (\widehat{\nu}_{nj}(k,k'),~\widehat{\nu}_{mj}(k,k'))$ respectively. Then the corollary below follows from Theorem \ref{th3.1} and Remark \ref{rem23}.
\begin{corollary}\label{cor3.1}
For an $I\times J\times K\times 2$ table, if there exists at least one $k$ or at least one pair $(j,j')$ of $Y_{2}$ such that $\widehat{\nu}_{k}(j,j')\not\in \widehat{OI}_{k}(j,j')$, or if there exists at least one $j$ or at least one pair $(k,k')$ of $Y_{3}$ such that $\widehat{\nu}_{j}(k,k')\not\in \widehat{OI}_{j}(k,k')$, then the missing data mechanism of $Y_{1}$ is MAR, but neither NMAR nor MCAR. 
\end{corollary}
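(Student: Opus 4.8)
The plan is to deduce the corollary from Theorem~\ref{th3.1} by contraposition, exploiting the fact that, under the modelling assumption made before Theorem~\ref{th3.1}, the missingness of $Y_{1}$ in an $I\times J\times K\times 2$ table is described by exactly the four submodels of (\ref{eq2}): C1 (NMAR), C2 and C3 (both MAR), and C4 (MCAR). Theorem~\ref{th3.1} is a statement about the \emph{population} cell probabilities: if $\pi$ satisfies the NMAR model C1 or the MCAR model C4 and $|\lambda_{Y_{1}R}(i,2)|<\infty$, then $\nu_{k}(j,j')\in OI_{k}(j,j')$ for every $k$ and every pair $(j,j')$ of $Y_{2}$, and $\nu_{j}(k,k')\in OI_{j}(k,k')$ for every $j$ and every pair $(k,k')$ of $Y_{3}$.

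First I would transfer this containment to the fitted quantities. Under the NMAR model C1 the fit is perfect, so the MLE is $\widehat{\pi}_{ijk1}=y_{ijk1}/N$, $\widehat{\pi}_{+jk2}=y_{+jk2}/N$; this fitted vector again satisfies the C1 log-linear structure (with finite interior parameter estimates), so Theorem~\ref{th3.1} applies verbatim with $\pi$ replaced by $\widehat{\pi}$ and yields $\widehat{\nu}_{k}(j,j')\in\widehat{OI}_{k}(j,j')$ and $\widehat{\nu}_{j}(k,k')\in\widehat{OI}_{j}(k,k')$ for the associated plug-in estimators (the simple ratios listed in the assessment subsection). Repeating the argument with the MLE under the MCAR model C4 --- using the closed forms for $\widehat{\pi}_{ijk1}$, $\widehat{\pi}_{+jk2}$ displayed there, which again obey the C4 structure --- gives the analogous inclusions for the C4-form estimators. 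Hence both candidate NMAR and MCAR models force \emph{every} estimated non-response odds into the corresponding estimated odds interval.

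Then contraposition completes the argument: if there is even a single $k$ or pair $(j,j')$ of $Y_{2}$ with $\widehat{\nu}_{k}(j,j')\notin\widehat{OI}_{k}(j,j')$, or a single $j$ or pair $(k,k')$ of $Y_{3}$ with $\widehat{\nu}_{j}(k,k')\notin\widehat{OI}_{j}(k,k')$, then the data are incompatible with Models C1 and C4, leaving only the MAR models C2 and C3 as plausible descriptions of the missingness of $Y_{1}$. By Theorem~\ref{th3.2} and Remark~\ref{rem23}, such a violation is precisely what a (weak) MAR mechanism of $Y_{1}$ can produce, so the conclusion is internally consistent; therefore the missing data mechanism of $Y_{1}$ is MAR, and neither NMAR nor MCAR.

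The main obstacle will be making the second step rigorous, that is, justifying the substitution of fitted probabilities into the population-level statement of Theorem~\ref{th3.1}. This needs the MLE under C1 (and under C4) to lie in the interior of the model, so that $\widehat{\lambda}_{Y_{1}R}(i,2)$ is finite and the fitted marginal and conditional counts are strictly positive --- a prerequisite for $\widehat{\nu}_{k}$, $\widehat{\nu}_{j}$ and the intervals $\widehat{OI}_{k}$, $\widehat{OI}_{j}$ to be well defined --- and it also requires pinning down that the estimators named in the corollary are exactly these plug-in estimators. Beyond that, only the bookkeeping of the elimination step remains: one must invoke the standing assumption that the candidate model class is exhausted by C1--C4, so that excluding NMAR and MCAR forces MAR.
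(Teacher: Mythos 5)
Your proposal is correct and follows essentially the same route as the paper, which derives the corollary by contraposition from Theorem~\ref{th3.1} (and Remark~\ref{rem23}): under the NMAR model C1 and the MCAR model C4 the inclusions hold, so they must also hold for the corresponding plug-in estimators (perfect-fit ratios under C1, closed-form MLE ratios under C4), and a single observed violation therefore eliminates C1 and C4, leaving only the MAR models C2 and C3. You are in fact more explicit than the paper about the step of substituting fitted probabilities into the population-level statement and about the interiority of the MLE needed for that substitution, which the paper leaves implicit.
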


\subsection{Case 2: Missing in two of the variables} 

WLOG, suppose data on $Y_{1}$ and $Y_{2}$ are missing and for $i=1,2,$ let $R_{i}$ denote the missing indicator for $Y_{i}$ such that $R_{i} = 1$ if $Y_{i}$ is observed and $R_{i} = 2$ otherwise. Then for $Y_{1},~Y_{2},~Y_{3},~R_{1}$ and $R_{2},$ we have an $I\times J\times K\times 2\times 2$ table with cell counts ${\bf y} = \{y_{ijkrs}\}$ where $1\leq i\leq I,~1\leq j\leq J,~1\leq k\leq K$ and $r,s = 1,2.$ The vector of observed counts is ${\bf y_{\textrm{obs}}} = (\{y_{ijk11}\},\{y_{+jk21}\},\{y_{i+k12}\},\{y_{++k22}\}).$ For $I=J=K=2$, the $2\times 2\times 2\times 2\times 2$ incomplete table is given by Table \ref{t5}. 
\begin{table}[ht]
\caption{$2\times 2\times 2\times 2\times 2$ Incomplete Table}\label{t5}
\begin{center}
$
\begin{array}{|c|c|cc|cc|}\hline
& & & & Y_{3} = 1 & Y_{3} = 2 \\ \hline
R_{1} = 1 & Y_{1} = 1 & R_{2} = 1 & Y_{2} = 1 & y_{11111} & y_{11211} \\   
& & & Y_{2} = 2 & y_{12111} & y_{12211} \\ \hline 
& & R_{2} = 2 & \text{Missing} & y_{1+112} & y_{1+212} \\ \hline
& Y_{1} = 2 & R_{2} = 1 & Y_{2} = 1 & y_{21111} & y_{21211} \\   
& & & Y_{2} = 2 & y_{22111} & y_{22211} \\ \hline 
& & R_{2} = 2 & \text{Missing} & y_{2+112} & y_{2+212} \\ \hline
R_{1} = 2 & \text{Missing} & R_{2} = 1 & Y_{2} = 1 & y_{+1121} & y_{+1221} \\   
& & & Y_{2} = 2 & y_{+2121} & y_{+2221} \\ \hline 
& & R_{2} = 2 & \text{Missing} & y_{++122} & y_{++222} \\ \hline  
\end{array}
$
\end{center}
\end{table}
Let ${\bf\pi} = \{\pi_{ijkrs}\}$ be the vector of cell probabilities, $\mu = \{\mu_{ijkrs}\}$ be the vector of expected counts and $N$ be the total cell count. Under Poisson sampling, the log-likelihood kernel of ${\bf\mu}$ is 
\begin{eqnarray}\label{eq3.2}
l({\bf\mu};{\bf y}_{\textrm{obs}}) &=& \sum_{i,j,k}y_{ijk11}\log \mu_{ijk11} + \sum_{j,k}y_{+jk21}\log \mu_{+jk21} + \sum_{i,k}y_{i+k12}\log \mu_{i+k12} \nonumber \\ 
& & + \sum_{k}y_{++k22}\log \pi_{++k22} - \sum_{i,j,k,r,s}\mu_{ijkrs}. \nonumber
\end{eqnarray}
The log-linear model in this case is (see Ghosh and Vellaisamy (2016a)) 
\begin{eqnarray}\label{eq3}
\log \mu_{ijkrs} &=& \lambda + \lambda_{Y_{1}}(i) + \lambda_{Y_{2}}(j) + \lambda_{Y_{3}}(k) + \lambda_{R_{1}}(x) + \lambda_{R_{2}}(s) + \lambda_{Y_{1}Y_{2}}(i,j) \nonumber \\
& & + \lambda_{Y_{1}Y_{3}}(i,k) + \lambda_{Y_{2}Y_{3}}(j,k) + \lambda_{Y_{1}R_{1}}(i,r) + \lambda_{Y_{2}R_{1}}(j,r) + \lambda_{Y_{3}R_{1}}(k,r) \nonumber \\
& & + \lambda_{Y_{1}R_{2}}(i,s) + \lambda_{Y_{2}R_{2}}(j,s) + \lambda_{Y_{3}R_{2}}(k,s) + \lambda_{R_{1}R_{2}}(r,s). 
\end{eqnarray}
Three-way and higher order associations are assumed to be zero in (\ref{eq3}) as they are difficult to interpret. Also, analysis by ML estimation (without using iterative procedures) becomes intractable requiring more parameters, which might lead to non-identifiable models. Rochani {\it et al.}~(2017) used the term `homogeneous log-linear model' to denote a log-linear model that permits all two-way interactions between variables as well as missing-data parameters but contains no higher-order interactions, for example, the model (\ref{eq3}).

Note that association terms among $Y_i$'s and those among $R_i$'s are not involved in studying the missing data mechanisms of $Y_i$'s in (\ref{eq3}). Hence, there is no need to include three-way or higher order interactions among the outcome variables such as $\lambda_{Y_{1}Y_{2}Y_{3}}$ or the missing indicators such as $\lambda_{R_{1}R_{2}R_{3}}$. It is assumed that the MAR mechanism of a variable depends on any one of the other variables so that interaction terms like $\lambda_{Y_{i}Y_{j}R_{k}}$ for $i\neq j\neq k$ are excluded from (\ref{eq3}). Also, the missingness mechanism of a variable cannot be NMAR and MAR simultaneously, which excludes terms with $\lambda_{Y_{i}Y_{j}R_{i}}$ for $i \neq j$ in (\ref{eq3}). Interactions such as $\lambda_{Y_{i}R_{k}R_{l}}$ for $i\neq k\neq l$ are absent in (\ref{eq3}) since their interpretation is unclear. Also, they are redundant for the derivation of closed-form estimates of the expected cell counts. Each log-linear parameter in (\ref{eq3}) satisfies the constraint that the sum over each of its arguments is 0. Based on the assumption regarding various missing mechanisms of a variable in Case 1, there are 16 identifiable missing data models, which are submodels of (\ref{eq3}) and categorized as follows.
\begin{enumerate}
\item[D1.] MCAR model for both $Y_{1}$ and $Y_{2}$ (1 model), 
\item[D2.] NMAR model for both $Y_{1}$ and $Y_{2}$ (1 model), 
\item[D3.] MAR models for both $Y_{1}$ and $Y_{2}$ (4 models),  
\item[D4.] Mixture of MCAR and NMAR models for $Y_{1}$ and $Y_{2}$ (2 models), 
\item[D5.] Mixture of MCAR and MAR models for $Y_{1}$ and $Y_{2}$ (4 models), 
\item[D6.] Mixture of NMAR and MAR models for $Y_{1}$ and $Y_{2}$ (4 models).
\end{enumerate}

\subsubsection{Properties of the missing data models} 

Define the odds $\nu'_{ij}(k,k')$, $\nu'_{mj}(k,k')$, $\nu'_{nj}(k,k')$, $\nu'_{j}(k,k')$, $\nu'_{ik}(j,j')$, $\nu'_{mk}(j,j')$, $\nu'_{nk}(j,j')$ and $\nu'_{k}(j,j')$ similarly as the corresponding ones defined for the case when $Y_{1}$ is missing in an $I\times J\times K\times 2$ table. In this case, replace $\pi_{ijk1}$ by $\pi_{ijk11}$, $\pi_{+jk2}$ by $\pi_{+jk21}$, $\pi_{ij'k1}$ by $\pi_{ij'k11}$, $\pi_{+j'k2}$ by $\pi_{+j'k21}$, $\pi_{ijk'1}$ by $\pi_{ijk'11}$ and $\pi_{+jk'2}$ by $\pi_{+jk'21}$. Also, define the following response and non-response odds based on $\pi$. 
\begin{equation*}
\omega'_{jk}(i,i') = \frac{\pi_{ijk11}}{\pi_{i'jk11}},~\omega'_{nk}(i,i') = \min_{j}\{\omega'_{jk}(i,i')\},~\omega'_{mk}(i,i') = \max_{j}\{\omega'_{jk}(i,i')\},~\omega'_{k}(i,i') = \frac{\pi_{i+k12}}{\pi_{i'+k12}};
\end{equation*} 
\begin{equation*}
\omega'_{ji}(k,k') = \frac{\pi_{ijk11}}{\pi_{ijk'11}},\omega'_{ni}(k,k') = \min_{j}\{\omega'_{ji}(k,k')\},\omega'_{mi}(k,k') = \max_{j}\{\omega'_{ji}(k,k')\},\omega'_{i}(k,k') = \frac{\pi_{i+k12}}{\pi_{i+k'12}}.
\end{equation*}
Let $OI'_{k}(i,i') = (\omega'_{nk}(i,i'),~\omega'_{mk}(i,i'))$, $OI'_{i}(k,k') = (\omega'_{ni}(k,k'),~\omega'_{mi}(k,k'))$, \\ $OI'_{k}(j,j') = (\nu'_{nk}(j,j'),~\nu'_{mk}(j,j'))$ and $OI'_{j}(k,k') = (\nu'_{nj}(k,k'),~\nu'_{mj}(k,k'))$. Applying the methods described for $I\times J\times 2\times 2$ and $I\times J\times K\times 2$ tables, the conditions under which the non-response odds belong to the open intervals formed by the response odds for models in D1-D6 may be obtained. Specifically, the following inequalities hold under D3.   
\begin{eqnarray}
A'_{mk}(j,j') &=& \frac{\sum_{i}\exp\{\lambda_{Y_{1}}(i) + \lambda_{Y_{1}Y_{2}}(i,j') + \lambda_{Y_{1}Y_{2}}(m,j) + \lambda_{Y_{1}Y_{3}}(i,k) + \lambda_{Y_{1}R_{2}}(i,1)\}}{\sum_{i}\exp\{\lambda_{Y_{1}}(i) + \lambda_{Y_{1}Y_{2}}(i,j) + \lambda_{Y_{1}Y_{2}}(m,j') + \lambda_{Y_{1}Y_{3}}(i,k) + \lambda_{Y_{1}R_{2}}(i,1)\}} > 1, \nonumber \\
& & \label{c1} \\
B'_{mk}(i,i') &=& \frac{\sum_{j}\exp\{\lambda_{Y_{2}}(j) + \lambda_{Y_{1}Y_{2}}(i',j) + \lambda_{Y_{1}Y_{2}}(i,m) + \lambda_{Y_{2}Y_{3}}(j,k) + \lambda_{Y_{2}R_{1}}(j,1)\}}{\sum_{j}\exp\{\lambda_{Y_{2}}(j) + \lambda_{Y_{1}Y_{2}}(i,j) + \lambda_{Y_{1}Y_{2}}(i',m) + \lambda_{Y_{2}Y_{3}}(j,k) + \lambda_{Y_{2}R_{1}}(j,1)\}} > 1, \nonumber \\
& & \label{c2}
\end{eqnarray}
where $m$ denotes the level of $Y_{1}$ ($Y_{2}$) corresponding to $\nu'_{mk}(j,j')$ ($\omega'_{mk}(i,i')$) for $A'_{mk}(j,j')$ ($B'_{mk}(i,i')$). Similarly, we may obtain $A'_{mj}(k,k') > 1$ by interchanging $j$ with $k$ in (\ref{c1}) and $B'_{mi}(k,k') > 1$ by interchanging $i$ with $k$ in (\ref{c2}). Also, we have $A'_{nk}(j,j'),A'_{nj}(k,k'),B'_{ni}(k,k')$ and $B'_{nk}(i,i')$, each less than 1, by replacing $m$ with $n$. For D5 and D6, some of the above inequalities hold with $\lambda_{Y_{2}R_{1}}(j,1) = 0$ in $B'_{mi}(k,k')$, $B'_{mk}(i,i')$, $B'_{ni}(k,k')$ and $B'_{nk}(i,i')$ or $\lambda_{Y_{1}R_{2}}(i,1) = 0$ in $A'_{mk}(j,j')$, $A'_{mj}(k,k')$, $A'_{nk}(j,j')$ and $A'_{nj}(k,k')$. Denote $A^{\ast}$ to be $A^{+}$ ($A'$ with $\lambda_{Y_{1}R_{2}}(i,1)\neq 0$) or $A^{-}$ ($A'$ with $\lambda_{Y_{1}R_{2}}(i,1) = 0$). Also, denote $B^{\ast}$ to be $B^{+}$ ($B'$ with $\lambda_{Y_{2}R_{1}}(j,1)\neq 0$) or $B^{-}$ ($B'$ with $\lambda_{Y_{2}R_{1}}(j,1) = 0$). Then the next result provides characterizations for the missing mechanisms of a variable in an $I\times J\times K\times 2\times 2$ table. This result aids in the assessment of the MCAR, NMAR and MAR assumptions of the above models based on only the observed cell counts or their sums. 
\begin{theorem}\label{th4.1}
Under the missing data models in D1-D6 for an $I\times J\times K\times 2\times 2$ table, we have the following cases corresponding to the missing mechanism of $Y_{1}$. \\
(a) If $Y_{1}$ has a NMAR or MCAR mechanism, then 
\begin{enumerate}
\item[(i)] $\nu'_{k}(j,j')\in OI'_{k}(j,j')$ if $|\lambda_{Y_{1}R_{1}}(i,2)| < \infty$, 
\item[(ii)] $\nu'_{j}(k,k')\in OI'_{k}(k,k')$ if $|\lambda_{Y_{1}R_{1}}(i,2)| < \infty$. 
\end{enumerate}
(b) If $Y_{1}$ has a MAR mechanism, then only one of the Conditions (1a,1b) and (2a,2b) holds: 
\begin{enumerate}
\item[1a.] For each $k$ and each pair $(j,j')$ of $Y_{2}$, only one of the conditions below holds: \\
(i) $\nu'_{k}(j,j')\in OI'_{k}(j,j')$ iff $-\frac{1}{2}\log A^{\ast}_{mk}(j,j') < \lambda_{Y_{2}R_{1}}(j',2) - \lambda_{Y_{2}R_{1}}(j,2) < -\frac{1}{2}\log A^{\ast}_{nk}(j,j')$, \\
(ii) $\nu'_{k}(j,j')\not\in OI'_{k}(j,j')$ iff $\lambda_{Y_{2}R_{1}}(j',2) - \lambda_{Y_{2}R_{1}}(j,2) > -\frac{1}{2}\log A^{\ast}_{nk}(j,j')$ or $\lambda_{Y_{2}R_{1}}(j',2) - \lambda_{Y_{2}R_{1}}(j,2) < -\frac{1}{2}\log A^{\ast}_{mk}(j,j')$,  
\item[1b.] $\nu'_{j}(k,k')\in OI'_{j}(k,k')$; 
\item[2a.] $\nu'_{k}(j,j')\in OI'_{k}(j,j')$,   
\item[2b.] For each $j$ and each pair $(k,k')$ of $Y_{3}$, only one of the conditions below holds: \\
(i) $\nu'_{j}(k,k')\in OI'_{j}(k,k')$ iff $-\frac{1}{2}\log A^{\ast}_{mj}(k,k') < \lambda_{Y_{3}R_{1}}(k',2) - \lambda_{Y_{3}R_{1}}(k,2) < -\frac{1}{2}\log A^{\ast}_{nj}(k,k')$, \\
(ii) $\nu'_{j}(k,k')\not\in OI'_{j}(k,k')$ iff $\lambda_{Y_{3}R_{1}}(k',2) - \lambda_{Y_{3}R_{1}}(k,2) > -\frac{1}{2}\log A^{\ast}_{nj}(k,k')$ or $\lambda_{Y_{3}R_{1}}(k',2) - \lambda_{Y_{3}R_{1}}(k,2) < -\frac{1}{2}\log A^{\ast}_{mj}(k,k')$.
\end{enumerate}
\end{theorem}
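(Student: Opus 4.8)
The plan is to mimic the proofs of Theorems~\ref{th3.1} and~\ref{th3.2}, treating the two missing indicators $R_{1},R_{2}$ and the extra log-linear parameters of (\ref{eq3}) as essentially inert for the odds that matter. The key observation is that in every one of the $16$ models in D1--D6 the response odds $\nu'_{ik}(j,j')=\pi_{ijk11}/\pi_{ij'k11}$ and $\nu'_{ij}(k,k')$ are built only from fully observed cells ($R_{1}=R_{2}=1$), whereas the non-response odds $\nu'_{k}(j,j')=\pi_{+jk21}/\pi_{+j'k21}$ and $\nu'_{j}(k,k')$ are sums over $i$ of cells with $R_{1}=2$, $R_{2}=1$. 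When one forms $\nu'_{mk}(j,j')/\nu'_{k}(j,j')$ and $\nu'_{nk}(j,j')/\nu'_{k}(j,j')$, every factor not depending on the summation index $i$ (that is, $\lambda$, all the $k$-, $r$- and $s$-only parameters, $\lambda_{R_{1}R_{2}}$, and the $Y_{2}Y_{3}$ interaction) cancels between numerator and denominator, and the $Y_{1}Y_{2}$ and $Y_{1}Y_{3}$ interactions reorganize exactly as in Section~3. Hence the computation reduces, model by model, to the algebra already carried out there, and all that is new is recording which of $\lambda_{Y_{1}R_{1}},\lambda_{Y_{2}R_{1}},\lambda_{Y_{3}R_{1}},\lambda_{Y_{1}R_{2}}$ survives.

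For part~(a) I would restrict to the models in which $Y_{1}$ is MCAR or NMAR, i.e.\ those in D1, D2 and D4. There $\lambda_{Y_{2}R_{1}}=\lambda_{Y_{3}R_{1}}=0$, so $\nu'_{ik}(j,j')$ carries no $R_{1}$-dependence, while $\nu'_{k}(j,j')$, being $\sum_{i}\mu_{ijk21}$ over $\sum_{i}\mu_{ij'k21}$, picks up inside each sum only a factor $\exp\{\lambda_{Y_{1}R_{1}}(i,2)\}$ (together with $\exp\{\lambda_{Y_{1}R_{2}}(i,1)\}$ when the mechanism of $Y_{2}$ happens to be NMAR or MAR-on-$Y_{1}$, but these are positive and do not affect the sign argument). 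Using $\nu'_{ik}(j,j')<\nu'_{mk}(j,j')$ for all $i$ one gets $\nu'_{mk}(j,j')/\nu'_{k}(j,j')>1$, and symmetrically $\nu'_{nk}(j,j')/\nu'_{k}(j,j')<1$, so that $\nu'_{k}(j,j')\in OI'_{k}(j,j')$ as long as $|\lambda_{Y_{1}R_{1}}(i,2)|<\infty$ keeps the weights finite and positive; interchanging the roles of $j$ and $k$ gives $\nu'_{j}(k,k')\in OI'_{j}(k,k')$. This proves (a)(i)--(ii).

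For part~(b) I would restrict to the models in D3, D5 and D6 in which $Y_{1}$ is MAR; by hypothesis this MAR mechanism depends on exactly one of $Y_{2}$ or $Y_{3}$, and the two alternatives give the mutually exclusive Conditions (1a,1b) and (2a,2b). If it depends on $Y_{2}$ then $\lambda_{Y_{2}R_{1}}\neq 0$, $\lambda_{Y_{3}R_{1}}=0$; the response odds $\nu'_{ik}(j,j')$ acquire the term $\lambda_{Y_{2}R_{1}}(j,1)-\lambda_{Y_{2}R_{1}}(j',1)$, and using the zero-sum constraint $\lambda_{Y_{2}R_{1}}(j,1)=-\lambda_{Y_{2}R_{1}}(j,2)$ one obtains, just as in the proof of Theorem~\ref{th3.2}, $\nu'_{mk}(j,j')/\nu'_{k}(j,j')=\exp\{2(\lambda_{Y_{2}R_{1}}(j',2)-\lambda_{Y_{2}R_{1}}(j,2))\}\,A^{\ast}_{mk}(j,j')$ and likewise with $n$; imposing that this ratio be $>1$ and the $n$-ratio be $<1$ rearranges into the two-sided bound in (1a)(i), the complementary inequalities giving (1a)(ii). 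Since $\lambda_{Y_{3}R_{1}}=0$, the odds $\nu'_{j}(k,k')$ behave as in the MCAR/NMAR situation and hence lie in $OI'_{j}(k,k')$, which is (1b). The case where the mechanism depends on $Y_{3}$ is completely symmetric and yields (2a,2b).

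The only genuinely delicate part is the bookkeeping distinguishing $A^{\ast}=A^{+}$ from $A^{\ast}=A^{-}$: whether the non-response odds of $Y_{1}$ retain the extra factor $\exp\{\lambda_{Y_{1}R_{2}}(i,1)\}$ depends on whether, in the model at hand, the mechanism of the other missing variable $Y_{2}$ is NMAR or MAR-on-$Y_{1}$ (giving $A^{+}$) versus MCAR or MAR-on-$Y_{3}$ (giving $A^{-}$). Keeping this straight across all of D3, D5 and D6 --- rather than any new inequality --- is what I expect to be the main obstacle; once it is handled, the statement follows by collecting the per-model conclusions, exactly as Tables~\ref{t2} and~\ref{t3} were assembled in the two-way case.
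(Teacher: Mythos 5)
Your proposal is correct and takes exactly the route the paper intends: the paper's entire proof of Theorem \ref{th4.1} is the single sentence that it is ``similar to those of Theorems \ref{th3.1} and \ref{th3.2}'', and your expansion --- cancelling all $i$-free factors in $\nu'_{mk}(j,j')/\nu'_{k}(j,j')$ and $\nu'_{nk}(j,j')/\nu'_{k}(j,j')$ so the algebra of Section 3.1 carries over, then splitting on whether the MAR mechanism of $Y_{1}$ acts through $Y_{2}$ or $Y_{3}$ --- is precisely that argument. Two bookkeeping slips are worth fixing: in part (a) the relevant models are all those in which $Y_{1}$ is MCAR or NMAR, which includes the appropriate halves of D5 and D6 as well as D1, D2 and D4 (your vanishing-parameter criterion $\lambda_{Y_{2}R_{1}}=\lambda_{Y_{3}R_{1}}=0$ does cover them, but your list does not); and the variant $A^{+}$ arises only when the mechanism of $Y_{2}$ is MAR depending on $Y_{1}$, since that is the only case with $\lambda_{Y_{1}R_{2}}\neq 0$ --- when $Y_{2}$ is NMAR the relevant term is $\lambda_{Y_{2}R_{2}}(j,s)$, which is free of the summation index $i$ and cancels in the ratios, so those D6 models give $A^{-}$, not $A^{+}$ as you state.
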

\begin{proof}
See Appendix E.
\end{proof}
A similar result for the various missing mechanisms of $Y_{2}$ under models in D1-D6 for an $I\times J\times K\times 2\times 2$ table can be obtained. 

\subsubsection{Assessment of the MCAR, NMAR and MAR mechanisms}

Here, we propose a method to assess the MCAR, NMAR and MAR  assumptions in an $I\times J\times K\times 2\times 2$ table. It can be shown that perfect fits for fully and partially observed data occur for models in D2 and D6 (see Ghosh and Vellaisamy (2016b) or Rochani {\it et al.}~(2017)). So, the MLE's of $\pi_{ijk11}$, $\pi_{i+k12}$ and $\pi_{+jk21}$ under the above models are $y_{ijk11}/N$, $y_{i+k12}/N$ and $y_{+jk21}/N$ respectively. This implies that the estimators $\widehat{\nu}'_{ij}(k,k')$, $\widehat{\nu}'_{mj}(k,k')$, $\widehat{\nu}'_{nj}(k,k')$, $\widehat{\nu}'_{j}(k,k')$, $\widehat{\nu}'_{ik}(j,j')$, $\widehat{\nu}'_{mk}(j,j')$, $\widehat{\nu}'_{nk}(j,j')$ and $\widehat{\nu}'_{k}(j,j')$ are similar to the corresponding ones defined for the case when $Y_{1}$ is missing in an $I\times J\times K\times 2$ table. In this case, replace $y_{ijk1}$ by $y_{ijk11}$, $y_{+jk2}$ by $y_{+jk21}$, $y_{ij'k1}$ by $y_{ij'k11}$, $y_{+j'k2}$ by $y_{+j'k21}$, $y_{ijk'1}$ by $y_{ijk'11}$ and $y_{+jk'2}$ by $y_{+jk'21}$. Also, the estimators of the odds $\omega$'s defined in the previous subsection are given below. 
\begin{equation*}
\widehat{\omega}'_{jk}(i,i') = \frac{y_{ijk11}}{y_{i'jk11}},~\widehat{\omega}'_{nk}(i,i') = \min_{j}\{\widehat{\omega}'_{jk}(i,i')\},~\widehat{\omega}'_{mk}(i,i') = \max_{j}\{\widehat{\omega}'_{jk}(i,i')\},~\widehat{\omega}'_{k}(i,i') = \frac{y_{i+k12}}{y_{i'+k12}};
\end{equation*} 
\begin{equation*}
\widehat{\omega}'_{ji}(k,k') = \frac{y_{ijk11}}{y_{ijk'11}},\widehat{\omega}'_{ni}(k,k') = \min_{j}\{\widehat{\omega}'_{ji}(k,k')\},\widehat{\omega}'_{mi}(k,k') = \max_{j}\{\widehat{\omega}'_{ji}(k,k')\},\widehat{\omega}'_{i}(k,k') = \frac{y_{i+k12}}{y_{i+k'12}}
\end{equation*}
The estimated expected counts and hence the response and the non-response odds under non-perfect fit models are functions of those under perfect-fit models. For example, the MLE's of the cell probabilities under a model in D4 (MCAR for $Y_{1}$, NMAR for $Y_{2}$) are (see Ghosh and Vellaisamy (2016b) or Rochani {\it et al.}~(2017))
\begin{equation*}
\hat{\pi}_{ijk11} = \frac{y_{ijk11}y_{+++11}y_{+jk+1}}{Ny_{++++1}y_{+jk11}},\quad\hat{\pi}_{+jk21} = \frac{y_{+++21}\sum_{i}\hat{\pi}_{ijk11}}{y_{+++11}} = \frac{y_{+jk+1}y_{+++21}}{Ny_{+++1}},\quad\hat{\pi}_{i+k12} = \frac{y_{i+k12}}{N}.
\end{equation*}
So the MLE's of the odds under the above model are
\begin{eqnarray*}
\widehat{\nu}'_{ik}(j,j') &=& \frac{y_{ijk11}y_{+jk+1}y_{+j'k11}}{y_{ij'k11}y_{+j'k+1}y_{+jk11}}, \widehat{\nu}'_{nk}(j,j') = \min_{i}\{\widehat{\nu}'_{ik}(j,j')\}, \widehat{\nu}'_{mk}(j,j') = \max_{i}\{\widehat{\nu}'_{ik}(j,j')\}, \nonumber \\
\widehat{\nu}'_{k}(j,j') &=& \frac{y_{+jk+1}}{y_{+j'k+1}}; \nonumber \\
\widehat{\nu}'_{ij}(k,k') &=& \frac{y_{ijk11}y_{+jk+1}y_{+jk'11}}{y_{ijk'11}y_{+jk'+1}y_{+jk11}}, \widehat{\nu}'_{nj}(k,k') = \min_{i}\{\widehat{\nu}'_{ij}(k,k')\}, \widehat{\nu}'_{mj}(k,k') = \max_{i}\{\widehat{\nu}'_{ij}(k,k')\}, \nonumber \\
\widehat{\nu}'_{j}(k,k') &=& \frac{y_{+jk+1}}{y_{+jk'+1}}; \nonumber \\
\widehat{\omega}'_{jk}(i,i') &=& \frac{y_{ijk11}}{y_{i'jk11}},~\widehat{\omega}'_{nk}(i,i') = \min_{j}\{\widehat{\omega}'_{jk}(i,i')\},~\widehat{\omega}'_{mk}(i,i') = \max_{j}\{\widehat{\omega}'_{jk}(i,i')\},~\widehat{\omega}'_{k}(i,i') = \frac{y_{i+k12}}{y_{i'+k12}}; \nonumber \\
\widehat{\omega}'_{ji}(k,k') &=& \frac{y_{ijk11}y_{+jk+1}y_{+jk'11}}{y_{ijk'11}y_{+jk'+1}y_{+jk11}},\widehat{\omega}'_{ni}(k,k') = \min_{j}\{\widehat{\omega}'_{ji}(k,k')\},\widehat{\omega}'_{mi}(k,k') = \max_{j}\{\widehat{\omega}'_{ji}(k,k')\}, \nonumber \\
\widehat{\omega}'_{i}(k,k') &=& \frac{y_{i+k12}}{y_{i+k'12}}. \nonumber
\end{eqnarray*} 
The MLE's of the response and non-response odds under the other non-perfect fit models can similarly be obtained using the corresponding estimated expected counts. Let $\widehat{OI}'_{k}(i,i') = (\widehat{\omega}'_{nk}(i,i'),~\widehat{\omega}'_{mk}(i,i'))$, $\widehat{OI}'_{i}(k,k') = (\widehat{\omega}'_{ni}(k,k'),~\widehat{\omega}'_{mi}(k,k'))$, $\widehat{OI}'_{k}(j,j') = (\widehat{\nu}'_{nk}(j,j'),~\widehat{\nu}'_{mk}(j,j'))$ and $\widehat{OI}'_{j}(k,k') = (\widehat{\nu}'_{nj}(k,k'),~\widehat{\nu}'_{mj}(k,k'))$. Then we have the following corollary based on Theorem \ref{th4.1}(a) and a similar result for the various missing mechanisms of $Y_{2}$. 
\begin{corollary}\label{cor3.2}
For an $I\times J\times K\times 2\times 2$ table, if $\widehat{\nu}'_{k}(j,j')\not\in \widehat{OI}'_{k}(j,j')$ or $\widehat{\nu}'_{k}(j,j')\not\in \widehat{OI}'_{k}(j,j')$ or $\widehat{\omega}'_{k}(i,i')\not\in \widehat{OI}'_{k}(i,i')$ or $\widehat{\omega}'_{i}(k,k')\not\in \widehat{OI}'_{i}(k,k')$ for at least one of $i,j,k$ or one of the pairs $(i,i'),(j,j'),(k,k')$, then the plausible missing mechanism of $Y_{1}$ or $Y_{2}$ is MAR, but neither NMAR nor MCAR.
\end{corollary}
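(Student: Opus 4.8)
The plan is to argue by contraposition, following the template of Corollary~\ref{cor3.1}. Suppose neither $Y_1$ nor $Y_2$ has a MAR mechanism; then, by the classification D1--D6, the true model lies in D1, D2 or D4, so each of $Y_1$ and $Y_2$ is MCAR or NMAR. First I would apply Theorem~\ref{th4.1}(a) to $Y_1$: away from boundary solutions $|\lambda_{Y_1R_1}(i,2)|<\infty$, and the theorem gives $\nu'_k(j,j')\in OI'_k(j,j')$ for every $k$ and every pair $(j,j')$ of $Y_2$, and $\nu'_j(k,k')\in OI'_j(k,k')$ for every $j$ and every pair $(k,k')$ of $Y_3$. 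Then I would invoke the $Y_2$-analogue of Theorem~\ref{th4.1}(a) announced just above the corollary: since $Y_2$ is MCAR or NMAR and $|\lambda_{Y_2R_2}(j,2)|<\infty$, it yields $\omega'_k(i,i')\in OI'_k(i,i')$ and $\omega'_i(k,k')\in OI'_i(k,k')$ for all relevant indices and pairs. Hence, at the level of the true cell probabilities $\pi$, each of the four families of non-response odds lies inside the open interval formed by the corresponding response odds.

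The second step is to carry these inclusions over to the observed counts. The estimators $\widehat{\nu}'_{\bullet}$ and $\widehat{\omega}'_{\bullet}$ entering the corollary are exactly the MLEs furnished by the perfect-fit models D2 and D6, where $\widehat{\pi}_{ijk11}=y_{ijk11}/N$, $\widehat{\pi}_{+jk21}=y_{+jk21}/N$ and $\widehat{\pi}_{i+k12}=y_{i+k12}/N$; because these fitted probabilities still obey the log-linear structure (with finite fitted $\lambda$'s in the non-boundary case), the algebraic identities used in the proofs of Theorems~\ref{th3.1}, \ref{th3.2} and \ref{th4.1} apply verbatim with hats, and for the non-perfect-fit possibilities D1 and D4 the same raw-odds estimators are at least consistent for the population odds, which Theorem~\ref{th4.1}(a) has already placed inside the population intervals. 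Either way, under any model in which both $Y_1$ and $Y_2$ are MCAR or NMAR one expects $\widehat{\nu}'_k(j,j')\in\widehat{OI}'_k(j,j')$, $\widehat{\nu}'_j(k,k')\in\widehat{OI}'_j(k,k')$, $\widehat{\omega}'_k(i,i')\in\widehat{OI}'_k(i,i')$ and $\widehat{\omega}'_i(k,k')\in\widehat{OI}'_i(k,k')$ for every $i,j,k$ and every pair. Contraposing: if at least one of these inclusions fails for some $i$, $j$, $k$ or some pair $(i,i')$, $(j,j')$, $(k,k')$, the hypothesis that both $Y_1$ and $Y_2$ are MCAR or NMAR is untenable, so the plausible mechanism of $Y_1$ or $Y_2$ must be MAR, and as such it is neither NMAR nor MCAR.

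The step I expect to need the most care is this passage from $\pi$ to $\widehat{\pi}$. One must confirm that the estimators being used really come from the perfect-fit models D2 and D6, so that no circularity creeps in (one is not assuming a model one is trying to exclude), and one must keep the boundary-solution hypotheses $|\lambda_{Y_1R_1}(i,2)|<\infty$ and $|\lambda_{Y_2R_2}(j,2)|<\infty$ in view, since a fitted odd equal to $0$ or $\infty$ makes the inclusion statement degenerate and the conclusion merely heuristic --- exactly the sampling uncertainty that the bootstrap of Section~4 is meant to quantify. A secondary, purely bookkeeping obstacle is tracking the four distinct odds families (two for $Y_1$, over the pairs of $Y_2$ and of $Y_3$; two for $Y_2$, over the pairs of $Y_1$ and of $Y_3$) and checking that Theorem~\ref{th4.1}(a) together with its $Y_2$-counterpart really does cover all of them.
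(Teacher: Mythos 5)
Your proposal is correct and follows essentially the same route as the paper, which gives no explicit proof but simply states that the corollary follows from Theorem~\ref{th4.1}(a) together with its $Y_{2}$-analogue: contrapose, note that under any model in D1, D2 or D4 both families of non-response odds lie in the corresponding response-odds intervals, and transfer this to the estimators. Your added caution about the population-to-sample step and the finiteness of the $\lambda$'s is more explicit than anything in the paper (which is why the conclusion is only phrased as ``plausible''), but it does not change the argument.
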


\subsection{Case 3: Missing in all three variables}
 
We omit the details for this case. Similar to the tables discussed earlier, we can characterize the missing mechanisms of a variable for an $I\times J\times K\times 2\times 2\times 2$ table. Also, a method to assess the MCAR, MAR and NMAR assumptions may be obtained as in Cases 1 and 2 using estimators of the response and non-response odds, and open intervals, which depend only on the observed cell counts.

\begin{remark}\label{rem33}
It is clear from the results in Sections 2 and 3 what the proposed methods in our paper can achieve and cannot. We have provided characterizations (see Theorems \ref{th2.2}, \ref{th2.1} and \ref{th3.1}-\ref{th4.1}) for each missing mechanism (NMAR, MCAR and MAR) of a variable in an incomplete table. However, our proposed methods (see Corollaries \ref{cor2.1}, \ref{cor3.1} and \ref{cor3.2}) can mainly verify if the missing mechanism of at least one of the variables in an incomplete table is MAR (neither NMAR nor MCAR) based on only the observed cell counts, but cannot identify NMAR and MCAR mechanisms separately. 
\end{remark}

\section{Data analysis}

In this section, we analyze some real-life datasets to demonstrate our results (Corollaries \ref{cor2.1}, \ref{cor3.1} and \ref{cor3.2}) on the assessment of missing data mechanisms in Sections 2 and 3. Note that the results are robust with respect to some minor changes in the data as long as any of the given conditions in the corollaries is satisfied.
\begin{example}\label{ex1}
Consider the data in Table \ref{t6} discussed in Baker {\it et al.} (1992), which cross-classifies mother's self-reported smoking status ($Y_{1}$) ($Y_{1} = 1(2)$ for smoker (non-smoker)) with newborn's weight ($Y_{2}$) ($Y_{2} = 1(2)$ if weight $< 2500$ grams ($\geq 2500$ grams)). The fully observed data comprises data on both smoking data and newborn's weight ($R_{1}=R_{2}=1$), while the supplementary margins contain data on only smoking status ($R_{1}=1,R_{2}=2$), data on only newborn's weight ($R_{1}=2,R_{2}=1$) and missing data on both smoking data and newborn's weight ($R_{1}=R_{2}=2$) . Comparing with Table \ref{t1}, we observe from Table \ref{t6} that $y_{1111}=4512$, $y_{1211}=21009$, $y_{2111}=3394$, $y_{2211}=24132$, $y_{1+12}=1049$, $y_{2+12}=1135$, $y_{+121}=142$, $y_{+221}=464$ and $y_{++22}=1224$, which help us compute the MLE's of the various odds (see Section 2.2) under Models M1-M8. 
\begin{table}[ht]   
\caption{Birth weight and smoking : observed counts}\label{t6}
\begin{center}
$
\begin{array}{|c|c|cc|c|}\hline
& & R_{2} = 1 & & R_{2} = 2 \\ \hline
& & Y_{2}=1 & Y_{2}=2 & Y_{2}~\textrm{missing} \\ \hline  
R_{1} = 1 & Y_{1}=1 & 4512 & 21009 & 1049 \\
& Y_{1}=2 & 3394 & 24132 & 1135 \\ \hline
R_{1} = 2 & Y_{1}~\textrm{missing} & 142 & 464 & 1224 \\ \hline
\end{array}
$
\end{center}
\end{table}
 Note that from Table \ref{t6}, the MLE's satisfy $\frac{142}{464}\not\in\left(\frac{3394}{24132},\frac{4512}{21009}\right)$, while $\frac{1049}{1135}\in\left(\frac{21009}{24132},\frac{4512}{3394}\right)$. Hence, from Corollary \ref{cor2.1}, the missing data mechanism of $Y_{1}$ or $Y_{2}$ is likely MAR, but neither NMAR nor MCAR. This result coincides with the analysis by Baker {\it et al.} (1992) who infer that the most parsimonious fit model is MAR for $Y_{1}$ and MCAR for $Y_{2}$ (Model M4 in Section 2). They also mention that boundary solutions occur on fitting NMAR models for $Y_{1}$ (Models M1, M2 and M3 in Section 2) to the dataset in Table \ref{t6}. This implies that these models provide poor fits to the observed data (see Clarke and Smith (2005)), which further supports our observation. Based on $p$-value of 1, the other plausible models are M5 and M6 (see Section 2) for which the missing mechanism is MAR for $Y_{1}$.  

To assess the uncertainty of the accuracy of the proposed method, bootstrap resampling is performed. We generate 10,000 random samples from the Models M4, M5 and M6 fitted to the data and check for those samples satisfying the condition on $\widehat{\nu}$'s in Corollary \ref{cor2.1}. The computed percentage of such samples is 99.99 for each of the above models, which confirms the accuracy of the proposed method. 
\end{example}

\begin{example}\label{ex2}
Consider Table \ref{t7} discussed in Park {\it et al.} (2014), which cross-classifies data on bone mineral density ($Y_{1}$) and family income ($Y_{2}$) in a $3\times 3\times 2\times 2$ incomplete table. Both variables $Y_{1}$ and $Y_{2}$ have three levels. The total count is $2998$ out of which data on $Y_{1}$ and $Y_{2}$ are available ($R_{1}=R_{2}=1$) for $1844$ persons, data on $Y_{1}$ only ($R_{1}=1,R_{2}=2$) for $231$ persons, data on $Y_{2}$ only ($R_{1}=2,R_{2}=1$) for $878$ persons, and data on neither of them ($R_{1}=R_{2}=2$)  for $45$ persons. Specifically from Table \ref{t7}, we have $y_{1111}=621$, $y_{1211}=290$, $y_{1311}=284$, $y_{2111}=260$, $y_{2211}=131$, $y_{2311}=117$, $y_{3111}=93$, $y_{3211}=30$, $y_{3311}=18$, $y_{1+12}=135$, $y_{2+12}=69$, $y_{3+12}=27$, $y_{+121}=456$, $y_{+221}=156$, $y_{+321}=266$ and $y_{++22}=45$, using which the MLE's of the various odds (see Section 2.2) can be obtained.
\begin{table}[ht]
\caption{Bone mineral density ($Y_{1}$) and family income ($Y_{2}$)}\label{t7}
\begin{center}
$
\begin{array}{|c|c|ccc|c|}\hline
& & R_{2} = 1 & & & R_{2} = 2 \\
& & Y_{2} = 1 & Y_{2} = 2 & Y_{2} = 3 & \text{Missing} \\ \hline
& Y_{1} = 1 & 621 & 290 & 284 & 135 \\   
R_{1} = 1 & Y_{1} = 2 & 260 & 131 & 117 & 69 \\
& Y_{1} = 3 & 93 & 30 & 18 & 27 \\ \hline
R_{1} = 2 & \text{Missing} & 456 & 156 & 266 & 45 \\ \hline
\end{array}
$
\end{center}
\end{table}
From Table \ref{t7}, note that the MLE's satisfy $\frac{456}{156}\in\left(\frac{260}{131},\frac{93}{30}\right)$, $\frac{456}{266}\not\in\left(\frac{621}{284},\frac{93}{18}\right)$ and $\frac{156}{266}\not\in\left(\frac{290}{284},\frac{30}{18}\right)$, while $\frac{135}{69}\not\in\left(\frac{290}{131},\frac{284}{117}\right)$, $\frac{135}{27}\not\in\left(\frac{621}{93},\frac{284}{18}\right)$ and $\frac{69}{27}\not\in\left(\frac{260}{93},\frac{117}{18}\right)$. 

Hence, it follows from Corollary \ref{cor2.1} that the missing mechanism of $Y_{1}$ or $Y_{2}$ is probably MAR, but neither NMAR nor MCAR. Let $G^{2}$ denote the likelihood ratio statistic for testing the goodness of fit of the proposed model against the perfect fit model. When we fit Models M1-M9 (see Section 2) to the data in Table \ref{t7}, we deduce that the plausible models are M4 (MAR for $Y_{1}$, MCAR for $Y_{2}$) and M5 (MAR for both $Y_{1}$ and $Y_{2}$) based on $p$-values of 0.066 and 1, and $G^{2}$ values of 5.42 and 0 respectively. This analysis supports our earlier observation. Park {\it et al.} (2014) also showed that boundary solutions occur on fitting NMAR models for $Y_{1}$ or $Y_{2}$ (Models M1, M2, M3, M6 and M8), which thereby fit the given data poorly and hence provides further evidence for our result. 

We carry out bootstrap resampling to evaluate the performance of the proposed method. We generate 10,000 random samples from the Models M4 and M5 fitted to the data and verify those samples satisfying the condition on $\widehat{\nu}$'s or on $\widehat{\omega}$'s in Corollary \ref{cor2.1}. The computed percentage of such samples for $\widehat{\nu}$'s ($\widehat{\omega}$'s) is 99.99  (69.06) and 99.99 (96.56) under models M4 and M5 respectively, which confirms the accuracy of the proposed method. 
\end{example}   

\begin{example}\label{ex3}
Here, we consider a real-life example from Rubin {\it et al.} (1995). Based on the Slovenian public opinion (SPO) survey, the dataset shown in Table \ref{t8} is a $2\times 2\times 2\times 2\times 2\times 2$ table classified by the variables Secession ($Y_{1}$), Attendance ($Y_{2}$) and Independence ($Y_{3}$), each having two levels Yes (1) and No (2). The ``Don't know" category (missing margins) for each variable is denoted by ``Missing". Since $G^{2}$ becomes undefined for null cell count, we replace the count 0 by 2 in the full table. The total cell count is 2076, consisting of data on all three variables observed ($R_{1} = R_{2} = R_{3} = 1$) for 1456 persons, $Y_{1}$ and $Y_{2}$ observed ($R_{1} = R_{2} = 1, R_{3} = 2$) for 57 persons, $Y_{1}$ and $Y_{3}$ observed ($R_{1} = R_{3} = 1, R_{2} = 2$) for 171 persons, $Y_{2}$ and $Y_{3}$ observed ($R_{2} = R_{3} = 1, R_{1} = 2$) for 95 persons, only $Y_{1}$ observed ($R_{2} = R_{3} =2, R_{1} = 1$) for 40 persons, only $Y_{2}$ observed ($R_{1} = R_{3} = 2, R_{2} = 1$) for 134 persons, only $Y_{3}$ observed ($R_{1} = R_{2} = 2, R_{3} = 1$) for 27 persons, and all missing ($R_{1} = R_{2} = R_{3} = 2$) for 96 persons. 
\begin{table}[ht]
\caption{Data from the SPO survey}\label{t8}
\begin{center}
$
\begin{array}{|c|c|cc|cc|c|}\hline
& & & & R_{3} = 1 & & R_{3} = 2 \\
& & & & Y_{3} = 1 & Y_{3} = 2 & \text{Missing} \\ \hline
R_{1} = 1 & Y_{1} = 1 & R_{2} = 1 & Y_{2} = 1 & 1191 & 8 & 21 \\   
& & & Y_{2} = 2 & 8 & 2 & 4 \\ \hline 
& & R_{2} = 2 & \text{Missing} & 107 & 3 & 9 \\ \hline
& Y_{1} = 2 & R_{2} = 1 & Y_{2} = 1 & 158 & 68 & 29 \\   
& & & Y_{2} = 2 & 7 & 14 & 3 \\ \hline 
& & R_{2} = 2 & \text{Missing} & 18 & 43 & 31 \\ \hline
R_{1} = 2 & \text{Missing} & R_{2} = 1 & Y_{2} = 1 & 90 & 2 & 109 \\   
& & & Y_{2} = 2 & 1 & 2 & 25 \\ \hline 
& & R_{2} = 2 & \text{Missing} & 19 & 8 & 96 \\ \hline  
\end{array}
$
\end{center}
\end{table}
\vone WLOG, consider the subtable (Table \ref{t9}) of Table \ref{t8} in which data on only $Y_{1}$ is missing. 
\begin{table}[ht]
\caption{Subtable of Table \ref{t8} for $Y_{1}$}\label{t9}
\begin{center}
$
\begin{array}{|c|c|c|cc|}\hline
& & & Y_{3} = 1 & Y_{3} = 2 \\ \hline
R = 1 & Y_{1} = 1 & Y_{2} = 1 & 1191 & 8 \\   
& & Y_{2} = 2 & 8 & 2  \\ \hline 
& Y_{1} = 2 & Y_{2} = 1 & 158 & 68 \\   
& & Y_{2} = 2 & 7 & 14 \\ \hline
R = 2 & \text{Missing} & Y_{2} = 1 & 90 & 2 \\   
& & Y_{2} = 2 & 1 & 2 \\ \hline 
\end{array}
$
\end{center}
\end{table}
Comparing Table \ref{t9} with Table \ref{t4}, we have $y_{11111}=1191$, $y_{11211}=8$, $y_{12111}=8$, $y_{12211}=2$, $y_{21111}=158$, $y_{21211}=68$, $y_{22111}=7$, $y_{22211}=14$, $y_{+1121}=90$, $y_{+1221}=2$, $y_{+2121}=1$ and $y_{+2221}=2$ leading to MLE's of the various response and non-response odds (see Section 3.1.2). From Table \ref{t9}, we observe that these MLE's satisfy $\frac{90}{2}\in\left(\frac{158}{68},\frac{1191}{8}\right)$, $\frac{1}{2}\not\in\left(\frac{7}{14},\frac{8}{2}\right)$, $\frac{90}{1}\in\left(\frac{158}{7},\frac{1191}{8}\right)$ and $\frac{2}{2}\not\in\left(\frac{8}{2},
\frac{68}{14}\right)$. Hence, from Corollary \ref{cor3.1}, the plausible missing mechanism of $Y_{1}$ is MAR, but neither NMAR nor MCAR. Let $G^{2}$ denote the likelihood ratio statistic for testing the goodness of fit of the proposed model against the perfect fit model. On fitting Models C1-C4 (see Section 3.1) to the data in Table \ref{t9} and discarding the perfect fit model C1, the plausible models are Models C2, C3 and C4 based on $p$-values of 0.29, 0.35 and 0.41 respectively. However, we deduce that the best fit model is Model C3 (MAR for $Y_{1}$) based on minimum $G^{2}$ value of 2.0949. This observation is consistent with our earlier result (Corollary \ref{cor3.1}).  

To evaluate the uncertainty of the accuracy of the proposed method, bootstrap resampling is performed. We generate 10,000 random samples from the Models C2, C3 and C4 fitted to the data and count the number of samples satisfying the conditions on $\widehat{\nu}$'s in Corollary \ref{cor3.1}. The computed percentages of such samples are 88.99, 96.95 and 89.95 under Models C2, C3 and C4 respectively, which confirms the accuracy of the proposed method. 
\end{example}

\begin{example}\label{ex4}
In this example, we use the data in Table \ref{t8} of Example \ref{ex3}. Consider WLOG the subtable (Table \ref{t10}) of Table \ref{t8} in which data on $Y_{1}$ and $Y_{2}$ are missing. 
\begin{table}[ht]
\caption{Subtable of Table \ref{t8} for $Y_{1}$ and $Y_{2}$}\label{t10}
\begin{center}
$
\begin{array}{|c|c|cc|cc|}\hline
& & & & Y_{3} = 1 & Y_{3} = 2 \\ \hline
R_{1} = 1 & Y_{1} = 1 & R_{2} = 1 & Y_{2} = 1 & 1191 & 8 \\   
& & & Y_{2} = 2 & 8 & 2 \\ \hline 
& & R_{2} = 2 & \text{Missing} & 107 & 3 \\ \hline
& Y_{1} = 2 & R_{2} = 1 & Y_{2} = 1 & 158 & 68 \\   
& & & Y_{2} = 2 & 7 & 14 \\ \hline 
& & R_{2} = 2 & \text{Missing} & 18 & 43 \\ \hline
R_{1} = 2 & \text{Missing} & R_{2} = 1 & Y_{2} = 1 & 90 & 2 \\   
& & & Y_{2} = 2 & 1 & 2 \\ \hline 
& & R_{2} = 2 & \text{Missing} & 19 & 8 \\ \hline  
\end{array}
$
\end{center}
\end{table}
Comparing Table \ref{t10} with Table \ref{t5}, we have $y_{1+112}=107$, $y_{1+212}=3$, $y_{2+112}=18$, $y_{2+212}=43$, $y_{++122}=19$ and $y_{++222}=8$ in addition to the counts in Table \ref{t9} of Example \ref{ex3}, which can be used to compute the MLE's of the various odds (see Section 3.2.2). From Table \ref{t10}, note that these MLE's satisfy $\frac{90}{2}\in\left(\frac{158}{68},\frac{1191}{8}\right)$, $\frac{1}{2}\not\in\left(\frac{7}{14},\frac{8}{2}\right)$, $\frac{90}{1}\in\left(\frac{158}{7},\frac{1191}{8}\right)$ and $\frac{2}{2}\not\in\left(\frac{8}{2},
\frac{68}{14}\right)$, while $\frac{107}{3}\in\left(\frac{8}{2},\frac{1191}{8}\right)$, $\frac{18}{43}\not\in\left(\frac{7}{14},\frac{158}{68}\right)$, $\frac{107}{18}\in\left(\frac{8}{7},\frac{1191}{158}\right)$ and $\frac{3}{43}\not\in\left(\frac{8}{68},
\frac{2}{14}\right)$. Hence, from Corollary \ref{cor3.2}, we deduce that the missing mechanism of $Y_{1}$ or $Y_{2}$ is most likely to be MAR and neither NMAR nor MCAR. Let $G^{2}$ denote the likelihood ratio statistic for testing the goodness of fit of the proposed model against the perfect fit model. When Models D1-D6 (see Section 3.2) are fitted to the data in Table \ref{t10}, the candidate models are those in which the missing mechanism is MAR for $Y_{2}$ (missing mechanism depends on $Y_{3}$) and that of $Y_{1}$ is MCAR or NMAR or MAR, based on $p$-values ($> 0.05$).  However, we deduce that the best fit model is NMAR for $Y_{1}$ and MAR for $Y_{2}$ (missing mechanism depends on $Y_{3}$) based on minimum $G^{2}$ value = 2.8076. This validates our earlier observation from Corollary \ref{cor3.2}. 

For assessing the accuracy of the proposed method, bootstrap resampling technique is used. We generate 10,000 random samples from the above models fitted to the data and check for those samples satisfying the conditions on $\widehat{\nu}$'s or on $\widehat{\omega}$'s in Corollary \ref{cor3.2}. The computed percentage of such samples for $\widehat{\nu}$'s ($\widehat{\omega}$'s) is 89.44 (93.41), 89.28 (93.75) and 89.93 (93.16) along with 96.56 (94.02) for the best fit model, which confirms the accuracy of the proposed method.
\end{example}

\begin{remark}\label{rem41}
For the two-way and three-way incomplete tables that we consider in Sections 2 and 3, the minimum no. of cells is 9 for a two-way $2\times 2\times 2\times 2$ table. Similarly, the no. of cells is 18 for a three-way $2\times 2\times 2\times 2\times 2$ table or 27 for a three-way $2\times 2\times 2\times 2\times 2\times 2$ table assuming each variable has only two levels. It is obviously more for larger tables with each variable having more categories. Under the assumption of strictly positive cell counts (which excludes sparse tables), the sample sizes (total cell counts) for incomplete tables will usually be large ($> 30$). So, the MLE's of the response and non-response odds used in our methods (see Corollaries \ref{cor2.1}, \ref{cor3.1} and \ref{cor3.2})  would be efficient and consistent. Thus our methods will perform well for most practical applications.
\end{remark}

\section{Conclusions}

In this paper, the missing data models for $I\times J\times 2\times 2,~ I\times J\times K\times 2,~ I\times J\times K\times 2\times 2$ and $I\times J\times K\times 2\times 2\times 2$ tables are introduced using hierarchical log-linear models. The forms of the various models are obtained by considering the missing mechanism of each variable and not the missingness of the outcome vector. Some particular properties of these missing data models are discussed in detail. We provide characteristic conditions for the various missing mechanisms of a variable in terms of response and non-response odds. These conditions help us to establish simple and useful procedures based only on the observed counts in the tables, which aid in the evaluation of MCAR, MAR and NMAR mechanisms of the variables. Finally, some real-life data analysis along with bootstrapping illustrates our results. Note that the models, techniques and results in this paper can be extended to higher dimensional incomplete tables also.

We now provide some comments on our methods of assessment for the MCAR or NMAR or MAR mechanism of each missing variable in various incomplete tables. Our aim has been to use estimates of the response and non-response odds involving only the fully and partially observed counts respectively in the tables, which are easy to calculate and simplify the verification process. If the missing mechanism of at least one of the variables is MCAR, then such models do not provide perfect fits for observed counts in the tables (see Ghosh and Vellaisamy (2016b)). Hence, estimates of the response and non-response odds are simple functions of the observed counts in this case.    

It is well known that the observed data are not sufficient to identify the mechanism
underlying missingness. So, the methods proposed in this paper are a form of sensitivity analysis to assess the MCAR, NMAR and MAR assumptions in an incomplete table. They are useful as data-analytic guidelines to perform model selection for the given incomplete data. An advantage of the proposed methods is that unlike existing selection procedures, there is no need to compute $p$-values, likelihood ratio statistics, AIC and BIC values to determine the missing data model. This is because these methods can suggest the missing mechanism of a variable directly (after checking some simple conditions from the given incomplete table) and hence do not constitute a goodness of fit testing procedure. Such methods work well when one of the variables is missing in an incomplete table (see Example \ref{ex3}). The best fit model is usually identified in this case. However, when two or more variables are missing in these tables, our methods can provide the probable missing mechanism (MAR or not) for each variable, but not the exact best fit model (see Examples \ref{ex1}, \ref{ex2} and \ref{ex4}).  Finally, the data analysis examples showed agreement and similar performance between the proposed methods and standard model selection criteria like $p$-values and $G^{2}$ in selecting plausible missing data models. Also, the results for bootstrapping procedures confirm the  accuracy of our proposed methods.

\section*{Appendix} 

\subsection*{A}{\it Proof of Theorem \ref{th2.2}}: First, we explore the conditions for which $\omega(i,i')\in OI(i,i')$ or $\omega(i,i')\not\in OI(i,i')$. \\
	1. Model M3 (NMAR for both $Y_{1}$ and $Y_{2}$) : \\
	Under Model M3, it can be shown that for any pair $(i,i')$ of $Y_{1}$, we have
	\begin{eqnarray*}
		\omega_{j}(i,i') &=& \exp\{\lambda_{Y_{1}}(i) - \lambda_{Y_{1}}(i') + \lambda_{Y_{1}Y_{2}}(i,j) - \lambda_{Y_{1}Y_{2}}(i',j) + \lambda_{Y_{1}R_{1}}(i,1) - \lambda_{Y_{1}R_{1}}(i',1)\}, \nonumber \\
		\omega(i,i') &=& \frac{\sum_{j}\exp\{\lambda_{Y_{1}}(i) + \lambda_{Y_{2}}(j) + \lambda_{Y_{1}Y_{2}}(i,j) + \lambda_{Y_{1}R_{1}}(i,1) + \lambda_{Y_{2}R_{2}}(j,2)\}}{\sum_{j}\exp\{\lambda_{Y_{1}}(i') + \lambda_{Y_{2}}(j) + \lambda_{Y_{1}Y_{2}}(i',j) + \lambda_{Y_{1}R_{1}}(i',1) + \lambda_{Y_{2}R_{2}}(j,2)\}},\nonumber \\
		\frac{\omega_{m}(i,i')}{\omega(i,i')} &=& \frac{\sum_{j}\exp\{\lambda_{Y_{2}}(j) + \lambda_{Y_{1}Y_{2}}(i',j) + \lambda_{Y_{1}Y_{2}}(i,m) + \lambda_{Y_{2}R_{2}}(j,2)\}}{\sum_{j}\exp\{\lambda_{Y_{2}}(j) + \lambda_{Y_{1}Y_{2}}(i,j) + \lambda_{Y_{1}Y_{2}}(i',m) + \lambda_{Y_{2}R_{2}}(j,2)\}}, \nonumber \\
		\frac{\omega_{n}(i,i')}{\omega(i,i')} &=& \frac{\sum_{j}\exp\{\lambda_{Y_{2}}(j) + \lambda_{Y_{1}Y_{2}}(i',j) + \lambda_{Y_{1}Y_{2}}(i,n) + \lambda_{Y_{2}R_{2}}(j,2)\}}{\sum_{j}\exp\{\lambda_{Y_{2}}(j) + \lambda_{Y_{1}Y_{2}}(i,j) + \lambda_{Y_{1}Y_{2}}(i',n) + \lambda_{Y_{2}R_{2}}(j,2)\}}. \nonumber   
	\end{eqnarray*}
	Now
	\begin{eqnarray*}
		&&\omega_{j}(i,i') < \omega_{m}(i,i') \nonumber \\
		&\Rightarrow& 1 < \frac{\sum_{j}\exp\{\lambda_{Y_{2}}(j) + \lambda_{Y_{1}Y_{2}}(i',j) + \lambda_{Y_{1}Y_{2}}(i,m) + \lambda_{Y_{2}R_{2}}(j,2)\}}{\sum_{j}\exp\{\lambda_{Y_{2}}(j) + \lambda_{Y_{1}Y_{2}}(i,j) + \lambda_{Y_{1}Y_{2}}(i',m) + \lambda_{Y_{2}R_{2}}(j,2)\}} \nonumber \\
		&\Rightarrow& \frac{\omega_{m}(i,i')}{\omega(i,i')} > 1. \nonumber 
	\end{eqnarray*}
	Also,
	\begin{eqnarray*}
		&&\omega_{j}(i,i') > \omega_{n}(i,i') \nonumber \\
		&\Rightarrow& 1 > \frac{\sum_{j}\exp\{\lambda_{Y_{2}}(j) + \lambda_{Y_{1}Y_{2}}(i',j) + \lambda_{Y_{1}Y_{2}}(i,n) + \lambda_{Y_{2}R_{2}}(j,2)\}}{\sum_{j}\exp\{\lambda_{Y_{2}}(j) + \lambda_{Y_{1}Y_{2}}(i,j) + \lambda_{Y_{1}Y_{2}}(i',n) + \lambda_{Y_{2}R_{2}}(j,2)\}} \nonumber \\
		&\Rightarrow& \frac{\omega_{n}(i,i')}{\omega(i,i')} < 1. \nonumber
	\end{eqnarray*}
	Hence, under Model 3, $\omega(i,i')\in(\omega_{n}(i,i'),~\omega_{m}(i,i')) = OI(i,i')$ for any pair $(i,i')$ of $Y_{1}$ if $|\lambda_{Y_{2}R_{2}}(j,2)| < \infty$. 
	\vone\noindent
	2. Model M5 (MAR for both $Y_{1}$ and $Y_{2}$) : \\
	Under Model M5, it can be shown that for any pair $(i,i')$ of $Y_{1}$, we have
	\begin{eqnarray*}
		\omega_{j}(i,i') &=& \exp\{\lambda_{Y_{1}}(i) - \lambda_{Y_{1}}(i') + \lambda_{Y_{1}Y_{2}}(i,j) - \lambda_{Y_{1}Y_{2}}(i',j) + \lambda_{Y_{1}R_{2}}(i,1) - \lambda_{Y_{1}R_{2}}(i',1)\},\nonumber \\
		\omega(i,i') &=& \frac{\sum_{j}\exp\{\lambda_{Y_{1}}(i) + \lambda_{Y_{2}}(j) + \lambda_{Y_{1}Y_{2}}(i,j) + \lambda_{Y_{1}R_{2}}(i,2) + \lambda_{Y_{2}R_{1}}(j,1)\}}{\sum_{j}\exp\{\lambda_{Y_{1}}(i') + \lambda_{Y_{2}}(j) + \lambda_{Y_{1}Y_{2}}(i',j) + \lambda_{Y_{1}R_{2}}(i',2) + \lambda_{Y_{2}R_{1}}(j,1)\}},\nonumber \\
		\frac{\omega_{m}(i,i')}{\omega(i,i')} &=& \exp\{2(\lambda_{Y_{1}R_{2}}(i',2) - \lambda_{Y_{1}R_{2}}(i,2))\}\times B_{m}(i,i'),\nonumber \\
		\frac{\omega_{n}(i,i')}{\omega(i,i')} &=& \exp\{2(\lambda_{Y_{1}R_{2}}(i',2) - \lambda_{Y_{1}R_{2}}(i,2))\}\times B_{n}(i,i'),\nonumber    
	\end{eqnarray*} 
	Now
	\begin{eqnarray*}
		&&\omega_{j}(i,i') < \omega_{m}(i,i') \nonumber \\
		&\Rightarrow& 1 < \frac{\sum_{j}\exp\{\lambda_{Y_{2}}(j) + \lambda_{Y_{1}Y_{2}}(i',j) + \lambda_{Y_{1}Y_{2}}(i,m) + \lambda_{Y_{2}R_{1}}(j,1)\}}{\sum_{j}\exp\{\lambda_{Y_{2}}(j) + \lambda_{Y_{1}Y_{2}}(i,j) + \lambda_{Y_{1}Y_{2}}(i',m) + \lambda_{Y_{2}R_{1}}(j,1)\}} \nonumber \\
		&\Rightarrow& B_{m}(i,i') > 1. \nonumber 
	\end{eqnarray*}
	Also,
	\begin{eqnarray*}
		&&\omega_{j}(i,i') > \omega_{n}(i,i') \nonumber \\
		&\Rightarrow& 1 > \frac{\sum_{j}\exp\{\lambda_{Y_{2}}(j) + \lambda_{Y_{1}Y_{2}}(i',j) + \lambda_{Y_{1}Y_{2}}(i,n) + \lambda_{Y_{2}R_{1}}(j,1)\}}{\sum_{j}\exp\{\lambda_{Y_{2}}(j) + \lambda_{Y_{1}Y_{2}}(i,j) + \lambda_{Y_{1}Y_{2}}(i',n) + \lambda_{Y_{2}R_{1}}(j,1)\}} \nonumber \\
		&\Rightarrow& B_{n}(i,i') < 1. \nonumber
	\end{eqnarray*}
	Suppose $\omega(i,i')\in OI(i,i')\Leftrightarrow \frac{\omega_{m}(i,i')}{\omega(i,i')} > 1$ and $\frac{\omega_{n}(i,i')}{\omega(i,i')} < 1$. Then
	\begin{equation*}
	\frac{\omega_{m}(i,i')}{\omega(i,i')} > 1 \Leftrightarrow \lambda_{Y_{1}R_{2}}(i',2) - \lambda_{Y_{1}R_{2}}(i,2) > -\frac{1}{2}\log B_{m}(i,i').
	\end{equation*} 
	Also,
	\begin{equation*}
	\frac{\omega_{n}(i,i')}{\omega(i,i')} < 1 \Leftrightarrow \lambda_{Y_{1}R_{2}}(i',2) - \lambda_{Y_{1}R_{2}}(i,2) < -\frac{1}{2}\log B_{n}(i,i').
	\end{equation*}
	Hence $\omega(i,i')\in OI(i,i')$ iff $-\frac{1}{2}\log B_{m}(i,i') < \lambda_{Y_{1}R_{2}}(i',2) - \lambda_{Y_{1}R_{2}}(i,2) < -\frac{1}{2}\log B_{n}(i,i')$. Equivalently, $\omega(i,i')\not\in OI(i,i')$ iff $\lambda_{Y_{1}R_{2}}(i',2) - \lambda_{Y_{1}R_{2}}(i,2) > -\frac{1}{2}\log B_{n}(i,i')$ or $\lambda_{Y_{1}R_{2}}(i',2) - \lambda_{Y_{1}R_{2}}(i,2)  < -\frac{1}{2}\log B_{m}(i,i')$. Thus under Model M5, only one of Conditions 1 and 2 holds:
	\begin{enumerate}
		\item[1.] $\omega(i,i')\in OI(i,i')$ iff $-\frac{1}{2}\log B_{m}(i,i') < \lambda_{Y_{1}R_{2}}(i',2) - \lambda_{Y_{1}R_{2}}(i,2) < -\frac{1}{2}\log B_{n}(i,i'),$ 
		\item[2.] $\omega(i,i')\not\in OI(i,i')$ iff $\lambda_{Y_{1}R_{2}}(i',2) - \lambda_{Y_{1}R_{2}}(i,2) > -\frac{1}{2}\log B_{n}(i,i')$ or $\lambda_{Y_{1}R_{2}}(i',2) - \lambda_{Y_{1}R_{2}}(i,2)  < -\frac{1}{2}\log B_{m}(i,i')$. 
	\end{enumerate}
	\vone\noindent
	Similar conditions can be obtained under the other models. Let $B'_{m}(i,i')$ and $B'_{n}(i,i')$ denote $B_{m}(i,i')$ and $B_{n}(i,i')$ respectively with $\lambda_{Y_{2}R_{1}}(j,1)=0$. Then Table \ref{t2} summarizes the conditions under which $\omega(i,i')\in OI(i,i')$ for any pair $(i,i')$ of $Y_{1}$ for Models M1-M9. 
	\begin{table}[ht]
		\caption{Conditions for $\omega(i,i')\in OI(i,i')$ under missing data models in an $I\times J\times 2\times 2$ Incomplete Table.}\label{t2}
		\begin{center}
			$
			{\renewcommand{\arraystretch}{1.3}
				\begin{array}{|c|c|}\hline
				\textrm{Model} & \textrm{Conditions} \\ \hline
				\textrm{Model M1} & \textrm{Nil} \\ \hline
				\textrm{Model M2} & -\frac{1}{2}\log B'_{m}(i,i') < \lambda_{Y_{1}R_{2}}(i',2) - \lambda_{Y_{1}R_{2}}(i,2) < -\frac{1}{2}\log B'_{n}(i,i') \\ \hline
				\textrm{Model M3} & |\lambda_{Y_{2}R_{2}}(j,2)| < \infty \\ \hline
				\textrm{Model M4} & \textrm{Nil} \\ \hline
				\textrm{Model M5} & -\frac{1}{2}\log B_{m}(i,i') < \lambda_{Y_{1}R_{2}}(i',2) - \lambda_{Y_{1}R_{2}}(i,2) < -\frac{1}{2}\log B_{n}(i,i') \\ \hline
				\textrm{Model M6} & |\lambda_{Y_{2}R_{2}}(j,2)| < \infty \\ \hline
				\textrm{Model M7} & -\frac{1}{2}\log B'_{m}(i,i') < \lambda_{Y_{1}R_{2}}(i',2) - \lambda_{Y_{1}R_{2}}(i,2) < -\frac{1}{2}\log B'_{n}(i,i') \\ \hline
				\textrm{Model M8} & |\lambda_{Y_{2}R_{2}}(j,2)| < \infty \\ \hline
				\textrm{Model M9} & \textrm{Nil} \\ \hline
				\end{array}}
			$
		\end{center} 
	\end{table}
	The proof of Part (a) follows from the Conditions in Table \ref{t2} for which $\omega(i,i')\in OI(i,i')$ under Models M1, M3, M4, M6, M8 and M9. Also, the proof of Part (b) follows from the Conditions in Table \ref{t2} for which $\omega(i,i')\in OI(i,i')$ or $\omega(i,i')\not\in OI(i,i')$ under Models M2, M5 and M7.

\subsection*{B}{\it Proof of Theorem \ref{th2.1}}:	Similar to the proof of Theorem \ref{th2.2}, we study the behaviour of the relevant odds. \\
	1. Model M3 (NMAR for both $Y_{1}$ and $Y_{2}$) : \\
	Consider the response and non-response odds based on $\pi$ for any pair $(j,j')$ of $Y_{2}$. Then using similar arguments as in the proof of Theorem \ref{th2.2}, it can be shown that $\nu(j,j')\in(\nu_{n}(j,j'),~\nu_{m}(j,j')) = OI(j,j')$ for any pair $(i,i')$ of $Y_{1}$ if $|\lambda_{Y_{1}R_{1}}(i,2)| < \infty$. 
	\vone\noindent
	2. Model M5 (MAR for both $Y_{1}$ and $Y_{2}$) : \\
	Consider the response and non-response odds based on $\pi$ for any pair $(j,j')$ of $Y_{2}$. Then using similar arguments as in the proof of Theorem \ref{th2.2}, it can be shown that 
	\begin{eqnarray*}
		\frac{\nu_{m}(j,j')}{\nu(j,j')} &=& \exp\{2(\lambda_{Y_{2}R_{1}}(j',2) - \lambda_{Y_{2}R_{1}}(j,2))\}\times A_{m}(j,j'),\nonumber \\
		\frac{\nu_{n}(j,j')}{\nu(j,j')} &=& \exp\{2(\lambda_{Y_{2}R_{1}}(j',2) - \lambda_{Y_{2}R_{1}}(j,2))\}\times A_{n}(j,j').\nonumber
	\end{eqnarray*}
	Now $\nu_{i}(j,j') < \nu_{m}(j,j')\Rightarrow A_{m}(j,j') > 1$ and $\nu_{i}(j,j') > \nu_{n}(j,j')\Rightarrow A_{n}(j,j') < 1$. Suppose $\nu(j,j')\in OI(j,j')\Leftrightarrow \frac{\nu_{m}(j,j')}{\nu(j,j')} > 1$ and $\frac{\nu_{n}(j,j')}{\nu(j,j')} < 1$. Then it can be shown that $\nu(j,j')\in OI(j,j')$ iff $-\frac{1}{2}\log A_{m}(j,j') < \lambda_{Y_{2}R_{1}}(j',2) - \lambda_{Y_{2}R_{1}}(j,2) < -\frac{1}{2}\log A_{n}(j,j')$. Equivalently, $\nu(j,j')\not\in OI(j,j')$ iff $\lambda_{Y_{2}R_{1}}(j',2) - \lambda_{Y_{2}R_{1}}(j,2) > -\frac{1}{2}\log A_{n}(j,j')$ or $\lambda_{Y_{2}R_{1}}(j',2) - \lambda_{Y_{2}R_{1}}(j,2)  < -\frac{1}{2}\log A_{m}(j,j')$. Thus under Model M5, only one of Conditions 1 and 2 holds:
	\begin{enumerate}
		\item[1.] $\nu(j,j')\in OI(j,j')$ iff $-\frac{1}{2}\log A_{m}(j,j') < \lambda_{Y_{2}R_{1}}(j',2) - \lambda_{Y_{2}R_{1}}(j,2) < -\frac{1}{2}\log A_{n}(j,j')$, 
		\item[2.] $\nu(j,j')\not\in OI(j,j')$ iff $\lambda_{Y_{2}R_{1}}(j',2) - \lambda_{Y_{2}R_{1}}(j,2) > -\frac{1}{2}\log A_{n}(j,j')$ or $\lambda_{Y_{2}R_{1}}(j',2) - \lambda_{Y_{2}R_{1}}(j,2)  < -\frac{1}{2}\log A_{m}(j,j')$.
	\end{enumerate}
	\vone\noindent
	Similar conditions can be obtained under the other models. Let $A'_{m}(j,j')$ and $A'_{n}(j,j')$ denote $A_{m}(j,j')$ and $A_{n}(j,j')$ respectively with $\lambda_{Y_{1}R_{2}}(i,1)=0$. Then Table \ref{t3} summarizes the conditions under which $\nu(j,j')\in OI(j,j')$ for any pair $(j,j')$ of $Y_{2}$ for Models M1-M9. 
	\begin{table}[ht]
		\caption{Conditions for $\nu(j,j')\in OI(j,j')$ under missing data models in an $I\times J\times 2\times 2$ Incomplete Table}\label{t3}
		\begin{center}
			$
			{\renewcommand{\arraystretch}{1.3}
				\begin{array}{|c|c|}\hline
				\textrm{Model} & \textrm{Conditions} \\ \hline
				\textrm{Model M1} & |\lambda_{Y_{1}R_{1}}(i,2)| < \infty \\ \hline
				\textrm{Model M2} & |\lambda_{Y_{1}R_{1}}(i,2)| < \infty \\ \hline
				\textrm{Model M3} & |\lambda_{Y_{1}R_{1}}(i,2)| < \infty \\ \hline
				\textrm{Model M4} & -\frac{1}{2}\log A'_{m}(j,j') < \lambda_{Y_{2}R_{1}}(j',2) - \lambda_{Y_{2}R_{1}}(j,2) < -\frac{1}{2}\log A'_{n}(j,j') \\ \hline
				\textrm{Model M5} & -\frac{1}{2}\log A_{m}(j,j') < \lambda_{Y_{2}R_{1}}(j',2) - \lambda_{Y_{2}R_{1}}(j,2) < -\frac{1}{2}\log A_{n}(j,j') \\ \hline
				\textrm{Model M6} & -\frac{1}{2}\log A'_{m}(j,j') < \lambda_{Y_{2}R_{1}}(j',2) - \lambda_{Y_{2}R_{1}}(j,2) < -\frac{1}{2}\log A'_{n}(j,j') \\ \hline
				\textrm{Model M7} & \textrm{Nil} \\ \hline
				\textrm{Model M8} & \textrm{Nil} \\ \hline
				\textrm{Model M9} & \textrm{Nil} \\ \hline
				\end{array}}
			$
		\end{center}
	\end{table} 
	The proof of Part (a) now follows from the Conditions in Table \ref{t3} for which $\nu(j,j')\in OI(j,j')$ under Models M1-M3 and M7-M9. Also, the proof of Part (b) follows from Conditions in Table \ref{t3} for which $\nu(j,j')\in OI(j,j')$ or $\nu(j,j')\not\in OI(j,j')$ under Models M4-M6.

\subsection*{C}{\it Proof of Theorem \ref{th3.1}}: Consider the models C1 and C4 for which the missing mechanism of $Y_{1}$ is NMAR and MCAR respectively in an $I\times J\times K\times 2$ table. Under Model C1, we have 
	\begin{eqnarray*}
		\nu_{ik}(j,j') &=& \exp\{\lambda_{Y_{2}}(j) - \lambda_{Y_{2}}(j') + \lambda_{Y_{1}Y_{2}}(i,j) - \lambda_{Y_{1}Y_{2}}(i,j') + \lambda_{Y_{2}Y_{3}}(j,k) - \lambda_{Y_{2}Y_{3}}(j',k)\}, \nonumber \\
		\nu_{k}(j,j') &=& \frac{\sum_{i}\exp\{\lambda_{Y_{1}}(i) + \lambda_{Y_{2}}(j) + \lambda_{Y_{1}Y_{2}}(i,j) + \lambda_{Y_{1}Y_{3}}(i,k) + \lambda_{Y_{2}Y_{3}}(j,k) + \lambda_{Y_{1}R}(i,2)\}}{\sum_{i}\exp\{\lambda_{Y_{1}}(i) + \lambda_{Y_{2}}(j') + \lambda_{Y_{1}Y_{2}}(i,j') + \lambda_{Y_{1}Y_{3}}(i,k) + \lambda_{Y_{2}Y_{3}}(j',k) + \lambda_{Y_{1}R}(i,2)\}},\nonumber \\
		\frac{\nu_{mk}(j,j')}{\nu_{k}(j,j')} &=& \frac{\sum_{i}\exp\{\lambda_{Y_{1}}(i) + \lambda_{Y_{1}Y_{2}}(i,j') + \lambda_{Y_{1}Y_{2}}(m,j) + \lambda_{Y_{1}Y_{3}}(i,k) + \lambda_{Y_{1}R}(i,2)\}}{\sum_{i}\exp\{\lambda_{Y_{1}}(i) + \lambda_{Y_{1}Y_{2}}(i,j) + \lambda_{Y_{1}Y_{2}}(m,j') + \lambda_{Y_{1}Y_{3}}(i,k) + \lambda_{Y_{1}R}(i,2)\}},\nonumber \\
		\frac{\nu_{nk}(j,j')}{\nu_{k}(j,j')} &=& \frac{\sum_{i}\exp\{\lambda_{Y_{1}}(i) + \lambda_{Y_{1}Y_{2}}(i,j') + \lambda_{Y_{1}Y_{2}}(n,j) + \lambda_{Y_{1}Y_{3}}(i,k) + \lambda_{Y_{1}R}(i,2)\}}{\sum_{i}\exp\{\lambda_{Y_{1}}(i) + \lambda_{Y_{1}Y_{2}}(i,j) + \lambda_{Y_{1}Y_{2}}(n,j') + \lambda_{Y_{1}Y_{3}}(i,k) + \lambda_{Y_{1}R}(i,2)\}}.\nonumber
	\end{eqnarray*}
	Now
	\begin{eqnarray*}
		&& \nu_{ik}(j,j') < \nu_{mk}(j,j') \nonumber \\
		&\Rightarrow& 1 < \frac{\sum_{i}\exp\{\lambda_{Y_{1}}(i) + \lambda_{Y_{1}Y_{2}}(i,j') + \lambda_{Y_{1}Y_{2}}(m,j) + \lambda_{Y_{1}Y_{3}}(i,k) + \lambda_{Y_{1}R}(i,2)\}}{\sum_{i}\exp\{\lambda_{Y_{1}}(i) + \lambda_{Y_{1}Y_{2}}(i,j) + \lambda_{Y_{1}Y_{2}}(m,j') + \lambda_{Y_{1}Y_{3}}(i,k) + \lambda_{Y_{1}R}(i,2)\}}\nonumber \\
		&\Rightarrow& \nu_{mk}(j,j') > \nu_{k}(j,j'). \nonumber  
	\end{eqnarray*}
	Also,
	\begin{eqnarray*}
		&& \nu_{ik}(j,j') > \nu_{nk}(j,j') \nonumber \\
		&\Rightarrow & 1 > \frac{\sum_{i}\exp\{\lambda_{Y_{1}}(i) + \lambda_{Y_{1}Y_{2}}(i,j') + \lambda_{Y_{1}Y_{2}}(n,j) + \lambda_{Y_{1}Y_{3}}(i,k) + \lambda_{Y_{1}R}(i,2)\}}{\sum_{i}\exp\{\lambda_{Y_{1}}(i) + \lambda_{Y_{1}Y_{2}}(i,j) + \lambda_{Y_{1}Y_{2}}(n,j') + \lambda_{Y_{1}Y_{3}}(i,k) + \lambda_{Y_{1}R}(i,2)\}}\nonumber \\
		&\Rightarrow & \nu_{nk}(j,j') < \nu_{k}(j,j'). \nonumber  
	\end{eqnarray*}
	Hence $\nu_{k}(j,j')\in(\nu_{nk}(j,j'),\nu_{mk}(j,j')) = OI_{k}(j,j')$ if $|\lambda_{Y_{1}R}(i,2)| < \infty$. Using similar arguments, we can show that $\nu_{j}(k,k')\in(\nu_{nj}(k,k'),\nu_{mj}(k,k')) = OI_{j}(k,k')$ if $|\lambda_{Y_{1}R}(i,2)| < \infty$. Thus under Model C1, $\nu_{k}(j,j')\in OI_{k}(j,j')$ and $\nu_{j}(k,k')\in OI_{j}(k,k')$ if $|\lambda_{Y_{1}R}(i,2)| < \infty$. 
	Using similar arguments, it can also be shown that under Model C4 (MCAR for $Y_{1}$), both $\nu_{k}(j,j')\in OI_{k}(j,j')$ and $\nu_{j}(k,k')\in OI_{j}(k,k')$ hold.
	This completes the proof.

\subsection*{D}{\it Proof of Theorem \ref{th3.2}}: Consider models C2 and C3 for which the missing mechanism of $Y_{1}$ is MAR in an $I\times J\times K\times 2$ table. Under Model C2, we have
	\begin{eqnarray*}
		\nu_{ik}(j,j') &=& \exp\{\lambda_{Y_{2}}(j) - \lambda_{Y_{2}}(j') + \lambda_{Y_{1}Y_{2}}(i,j) - \lambda_{Y_{1}Y_{2}}(i,j') + \lambda_{Y_{2}Y_{3}}(j,k) - \lambda_{Y_{2}Y_{3}}(j',k) \nonumber \\
		&& + \lambda_{Y_{2}R}(j,1) - \lambda_{Y_{2}R}(j',1)\},\nonumber \\
		\nu_{k}(j,j') &=& \frac{\sum_{i}\exp\{\lambda_{Y_{1}}(i) + \lambda_{Y_{2}}(j) + \lambda_{Y_{1}Y_{2}}(i,j) + \lambda_{Y_{1}Y_{3}}(i,k) + \lambda_{Y_{2}Y_{3}}(j,k) + \lambda_{Y_{2}R}(j,2)\}}{\sum_{i}\exp\{\lambda_{Y_{1}}(i) + \lambda_{Y_{2}}(j') + \lambda_{Y_{1}Y_{2}}(i,j') + \lambda_{Y_{1}Y_{3}}(i,k) + \lambda_{Y_{2}Y_{3}}(j',k) + \lambda_{Y_{2}R}(j',2)\}},\nonumber \\
		\frac{\nu_{mk}(j,j')}{\nu_{k}(j,j')} &=& \exp[2\{\lambda_{Y_{2}R}(j',2) - \lambda_{Y_{2}R}(j,2)\}]\times A_{mk}(j,j'), \nonumber \\
		\frac{\nu_{nk}(j,j')}{\nu_{k}(j,j')} &=& \exp[2\{\lambda_{Y_{2}R}(j',2) - \lambda_{Y_{2}R}(j,2)\}]\times A_{nk}(j,j'). \nonumber
	\end{eqnarray*}
	Now $\nu_{ik}(j,j') < \nu_{mk}(j,j')\Rightarrow A_{mk}(j,j') > 1$ and $\nu_{ik}(j,j') > \nu_{nk}(j,j')\Rightarrow A_{nk}(j,j') < 1$. Suppose $\nu_{k}(j,j')\in OI_{k}(j,j')\Leftrightarrow\frac{\nu_{mk}(j,j')}{\nu_{k}(j,j')} > 1$ and $\frac{\nu_{nk}(j,j')}{\nu_{k}(j,j')} < 1$. Then we have
	\begin{equation*}
	-\frac{1}{2}\log A_{mk}(j,j') < \lambda_{Y_{2}R}(j',2) - \lambda_{Y_{2}R}(j,2) < -\frac{1}{2}\log A_{nk}(j,j').
	\end{equation*}
	Equivalently, $\nu_{k}(j,j')\not\in OI_{k}(j,j')\Leftrightarrow\nu_{k}(j,j')\not\in OI_{k}(j,j')$ iff $\lambda_{Y_{2}R}(j',2) - \lambda_{Y_{2}R}(j,2) > -\frac{1}{2}\log A_{nk}(j,j')$ or $\lambda_{Y_{2}R}(j',2) - \lambda_{Y_{2}R}(j,2)  < -\frac{1}{2}\log A_{mk}(j,j')$.
	
	Next, consider the response and non-response odds based on $\pi$ for any pair $(k,k')$ of $Y_{3}$ and $1\leq j\leq J$. Then
	\begin{eqnarray*}
		\nu_{ij}(k,k') &=& \exp\{\lambda_{Y_{3}}(k) - \lambda_{Y_{3}}(k') + \lambda_{Y_{1}Y_{3}}(i,k) - \lambda_{Y_{1}Y_{3}}(i,k') + \lambda_{Y_{2}Y_{3}}(j,k) - \lambda_{Y_{2}Y_{3}}(j,k')\}, \nonumber \\
		\nu_{j}(k,k') &=& \frac{\sum_{i}\exp\{\lambda_{Y_{1}}(i) + \lambda_{Y_{3}}(k) + \lambda_{Y_{1}Y_{2}}(i,j) + \lambda_{Y_{1}Y_{3}}(i,k) + \lambda_{Y_{2}Y_{3}}(j,k)\}}{\sum_{i}\exp\{\lambda_{Y_{1}}(i) + \lambda_{Y_{3}}(k') + \lambda_{Y_{1}Y_{2}}(i,j) + \lambda_{Y_{1}Y_{3}}(i,k') + \lambda_{Y_{2}Y_{3}}(j,k')\}},\nonumber \\
		\frac{\nu_{mj}(k,k')}{\nu_{j}(k,k')} &=& \frac{\sum_{i}\exp\{\lambda_{Y_{1}}(i) + \lambda_{Y_{1}Y_{2}}(i,j) + \lambda_{Y_{1}Y_{3}}(i,k') + \lambda_{Y_{1}Y_{3}}(m,k)\}}{\sum_{i}\exp\{\lambda_{Y_{1}}(i) + \lambda_{Y_{1}Y_{2}}(i,j) + \lambda_{Y_{1}Y_{3}}(i,k) + \lambda_{Y_{1}Y_{3}}(m,k')\}}. \nonumber 
	\end{eqnarray*}
	Now it can be shown that $\nu_{ij}(k,k') < \nu_{mj}(k,k')\Rightarrow\nu_{mj}(k,k') > \nu_{j}(k,k')$ and $\nu_{ij}(k,k') > \nu_{nj}(k,k')\Rightarrow\nu_{nj}(k,k') < \nu_{j}(k,k')$. Hence $\nu_{j}(k,k')\in(\nu_{nj}(k,k'), \nu_{mj}(k,k')) = OI_{j}(k,k')$. So under Model C2, both the following Conditions 1a and 1b hold: 
	\begin{enumerate}
		\item[1a.] Only one of the following conditions holds. \\
		(i) $\nu_{k}(j,j')\in OI_{k}(j,j')$ iff $-\frac{1}{2}\log A_{mk}(j,j') < \lambda_{Y_{2}R}(j',2) - \lambda_{Y_{2}R}(j,2) < -\frac{1}{2}\log A_{nk}(j,j')$, \\
		(ii) $\nu_{k}(j,j')\not\in OI_{k}(j,j')$ iff $\lambda_{Y_{2}R}(j',2) - \lambda_{Y_{2}R}(j,2) > -\frac{1}{2}\log A_{nk}(j,j')$ or $\lambda_{Y_{2}R}(j',2) - \lambda_{Y_{2}R}(j,2)  < -\frac{1}{2}\log A_{mk}(j,j')$ 
		\item[1b.] $\nu_{j}(k,k')\in OI_{j}(k,k')$.
	\end{enumerate}
	Using similar arguments, we can show that under Model C3 (MAR for $Y_{1}$), both the following Conditions 2a and 2b hold:
	\begin{enumerate}
		\item[2a.] $\nu_{k}(j,j')\in OI_{k}(j,j')$, 
		\item[2b.] Only one of the conditions below holds: \\
		(i) $\nu_{j}(k,k')\in OI_{j}(k,k')$ iff $-\frac{1}{2}\log A_{mj}(k,k') < \lambda_{Y_{3}R}(k',2) - \lambda_{Y_{3}R}(k,2) < -\frac{1}{2}\log A_{nj}(k,k'),$ \\
		(ii) $\nu_{j}(k,k')\not\in OI_{j}(k,k')$ iff $\lambda_{Y_{3}R}(k',2) - \lambda_{Y_{3}R}(k,2) > -\frac{1}{2}\log A_{nj}(k,k')$ or $\lambda_{Y_{3}R}(k',2) - \lambda_{Y_{3}R}(k,2)  < -\frac{1}{2}\log A_{mj}(k,k')$.
	\end{enumerate}
	By assumption, the MAR mechanism of $Y_{1}$ can depend on $Y_{2}$ or $Y_{3}$ but not both. Hence, only one of Conditions (1a,1b) and (2a,2b) characterizes the MAR mechanism of $Y_{1}$.

\subsection*{E}{\it Proof of Theorem \ref{th4.1}}: The proof is similar to that of Theorems \ref{th3.1} and \ref{th3.2}.

\vone\noindent
{\bf Acknowledgements:} The authors are grateful to both the reviewers for going through the manuscript carefully and offering various comments and suggestions, which greatly improved the quality of the paper.


\begin{thebibliography}{99}
\bibitem{Baker1988} Baker, S. G., \& Laird, N. M. (1988). Regression analysis for categorical variables with outcome subject to nonignorable nonresponse. {\it Journal of the American Statistical Association}, {\it 83}, 62-69.
\bibitem{Baker1992} Baker, S.G., Rosenberger, W. F., \& Dersimonian, R. (1992). Closed-form estimates for missing counts in two-way contingency tables. {\it Statistics in Medicine}, {\it 11}, 643-657.
\bibitem{Clarke2005} Clarke, P. S., \& Smith, P. W. F. (2005). On maximum likelihood estimation for log-linear models with non-ignorable non-responses. {\it Statistics \& Probability Letters}, {\it 73}, 441-448.
\bibitem{Ghosh2016} Ghosh, S., \& Vellaisamy, P. (2016b). Closed form estimates for missing counts in multidimensional incomplete tables. {\it arXiv:1602.00947}
\bibitem{Ghosh2016} Ghosh, S., \& Vellaisamy, P. (2016a). On the occurrence of boundary solutions in multidimensional incomplete tables. {\it Statistics \& Probability Letters}, {\it 119}, 63-75.
\bibitem{2017} Ghosh, S., \& Vellaisamy, P. (2017). On the occurrence of boundary solutions in two-way incomplete tables. {\it In REVSTAT}. (in press).
\bibitem{Park2015} Kim, S., Park, Y., \& Kim, D. (2015). On missing-at-random mechanism in two-way incomplete contingency tables. {\it Statistics \& Probability Letters}, {\it 96}, 196-203.
\bibitem{Little2002} Little, J. A., \&  Rubin, D. B. (2002). {\it Statistical Analysis with Missing Data}. (2nd ed.). Wiley, New York.
\bibitem{Molenberghs2008} Molenberghs, G., Beunckens, C., Sotto, C., \& Kenward, M. G. (2008). Every missing not at random model has a missingness at random counterpart with equal fit. {\it Journal of the Royal Statistical Society: Series B}, {\it 70}, 371-388.
\bibitem{Molenberghs2001} Molenberghs, G., Kenward, M. G., \& Goetghebeur, E. (2001). Sensitivity analysis for incomplete contingency tables : the Slovenian plebiscite case. {\it Journal of the Royal Statistical Society. Series C. Applied Statistics}, {\it 50}, 15-29.
\bibitem{Park2014} Park, Y., Kim, D., \& Kim, S, (2014). Identification of the occurrence of boundary solutions in a contingency table with nonignorable nonresponse. {\it Statistics \& Probability Letters}, {\it 93}, 34-40.
\bibitem{Rochani2017} Rochani, H. D., Vogel, R. L., Samawi, H. M., \& Linder, D. F. (2017). Estimates for cell counts and common odds ratio in three-way contingency tables by homogeneous log-linear models with missing data. {\it AStA Advances in Statistical Analysis}, {\it 101}, 51-65.
\bibitem{Rubin1995} Rubin, D. B., Stern, H. S., \& Vehovar, V. (1995). Handling ``Don't know" survey responses : the case of the Slovenian plebiscite. {\it Journal of the American Statistical Association}, {\it 90}, 822-828.
\bibitem{Vansteelandt2006} Vansteelandt, S., Goetghebeur, E., Kenward, M. G., \& Molenberghs, G. (2006). Ignorance and uncertainty regions as inferential tools in a sensitivity analysis. {\it Statistica Sinica}, {\it 16}, 953-979.   
\end{thebibliography}
\end{document}